\newcommand{\fig}[1]{Figure{~#1}}
\newcommand{\pout}[1]{p^\mathrm{out}_{#1}}
\newcommand{\pin}[1]{p^\mathrm{in}_{#1}}
\newcommand{\tin}[1]{t^\mathrm{in}_{#1}}
\newcommand{\tout}[1]{t^\mathrm{out}_{#1}}
\newcommand{\kiter}{\xi}
\newif\iffigs
\newtheorem{remark}{Remark}
\newtheorem{proposition}{Proposition}
\newtheorem{assumption}{Assumption}
\definecolor{wheat}{rgb}{0.96,0.87,0.70}
\begin{document}
\title{A Semi-Distributed Interior Point Algorithm\\ for Optimal Coordination of Automated Vehicles at Intersections}
\author{Robert Hult, Mario Zanon\textsuperscript{*}, S\'{e}bastien Gros, Paolo Falcone
\thanks{\textsuperscript{*}Corresponding author. This work was partially supported by Vetenskapsr{\aa}det, grant number 2012-4038.
	
	R. Hult is with Volvo Autonomous Solutions, G\"oteborg, Sweden. e-mail: robert.hult@volvo.com.  
	
	M. Zanon is with the IMT School for Advanced Studies, Lucca Italy. e-mail: mario.zanon@imtlucca.it. 
	
	S. Gros is with the Norwegian University of Science and Technology, Trondheim, Norway. e-mail: sebastien.gros@ntnu.no. 
	
	P. Falcone is with Universit\`a degli Studi di Modena e Reggio Emilia. e-mail: falcone@unimore.it. }
}
\maketitle

\begin{abstract}
In this paper, we consider the optimal coordination of automated vehicles at intersections under fixed crossing orders.
We formulate the problem using direct optimal control and exploit the structure to construct a semi-distributed primal-dual interior-point algorithm to solve it by parallelizing most of the computations.
Differently from standard distributed optimization algorithms, where the optimization problem is split, in our approach we split the linear algebra steps, such that the algorithm takes the same steps as a fully centralized one, while still performing computations in a  distributed fashion.
We analyze the communication requirements of the algorithm, and propose an approximation scheme which can significantly reduce the data exchange.
We demonstrate the effectiveness of the algorithm in hard but realistic scenarios, which show that the approximation leads to reductions in communicated data of almost 99\% of the exact formulation, at the expense of less than 1\% suboptimality.
\end{abstract}

\begin{IEEEkeywords}
	Intersection Coordination, Networked Mobile Systems, Model Predictive Control, Distributed Optimization
\end{IEEEkeywords}
\IEEEpeerreviewmaketitle
\section{Introduction}
The last decade has seen a rapid development of Automated Vehicles (AV) technologies, including dedicated control, perception and communication strategies. Several standards have been adopted for vehicle-to-vehicle communication, and the use of next generation cellular communication in automotive applications is under investigation.
Consequently, the interest in applications where the AV share information and cooperate is increasing, and it is commonly held that Cooperative Automated Vehicles (CAV) will have positive effects on traffic.

One such case is the coordination of CAVs at intersections.
The idea is to let the CAVs jointly decide how to cross the intersection safely and efficiently, rather than relying on traffic-lights, road signs and traffic rules.

The literature on algorithms for coordination of CAV at intersections was surveyed in \cite{Englund2016,RiosTorres2017_kappa}, and even though most work is recent, the number of publications is growing rapidly.
While a substantial part of the literature relies completely on heuristic approaches~\cite{Dresner2008_kappa,Kowshik2011_kappa,Lee2012_kappa}, a number of contributions that employ Optimal Control (OC)~\cite{Kim2014_kappa,Katriniok2017_kappa,Britzelmeier2018_kappa,Malikopolous2018_kappa,Tallapragada2015_kappa,Riegger2016_kappa,Kneissl2018_kappa,Bali2018_kappa, Gerdts2021, katriniok2021, Mihaly2020} have been proposed recently. 
However, most OC-based algorithms partially rely on heuristics to handle the difficult combinatorial nature of the problem, which stems from the need to determine the order in which the vehicles cross the intersection.
In a number of contributions the problem is solved in two stages where 1) the crossing order is found through a heuristic, typically variations of ``first-come-first-served" \cite{Malikopolous2018_kappa, Katriniok2017_kappa, katriniok2021, Mihaly2020} or through a simplified mixed-integer optimization~\cite{Hult2018b_kappa,Hult2019c_kappa, Gerdts2021}; and 2) the control commands are found using OC-tools \cite{Riegger2016_kappa,Tallapragada2015_kappa,Kneissl2018_kappa, Gerdts2021, Hult2018b_kappa, Hult2019c_kappa}.
In this paper we propose an algorithm intended for such applications which deals with the problem of finding the optimal control commands for a fixed crossing order by relying on direct methods for OC, which transcribe the problem into a Nonlinear programming Problem (NLP). 

The fixed-order crossing problem can be solved by several approaches alternative to the one we propose in this paper, including hierarchical (e.g., bi-level) optimization, Mathematical Programming with Equilibrium Constraints (MPEC), mixed-integer NLPs (MINLP), etc. The main difficulty related to approaches based on MPEC or MINLP is the difficulty in solving these problems, which can be significantly higher than the one related to solving a Nonlinear Programming Problems (NLP). Additionally, approaches based on MPEC and MINLP are difficult to solve in a distributed fashion. Concerning hierarchical optimization, our approach can be seen as a hierarchical optimization problem in which, rather than solving the lower-level problems to full convergence, a single iterate is performed. This generally reduces the amount of computations, but can result in a slight increase in the amount of iterations taken in the upper-level problem, compared to, e.g., bi-level approaches.

In \cite{Hult2016_kappa}, we introduced a  Sequential Quadratic Programming (SQP) algorithm based on a primal decomposition of the fixed-order coordination problem, where most computations are distributed and performed on-board the vehicles in parallel.
We considered the receding horizon application of the SQP algorithm in \cite{Hult2018c_kappa}, where we also presented experimental results which demonstrated that the proposed formulation is robust with respect to packet losses, state estimation errors and unmodeled dynamics.
We extended the algorithm to handle nonlinear dynamics and economic objective functions in \cite{Hult2018a_kappa} and to handle scenarios with turning vehicles in \cite{Hult2019a_kappa}.
In \cite{Hult2018b_kappa}, we proposed an OC-based heuristic for crossing order selection, and compared the performance of our approach to  standard traffic-lights and other algorithms in \cite{Hult2019c_kappa}.
Robustness, recursive feasibility and optimality of the coordination algorithm considered in this paper have been discussed in~\cite{Hult2018c_kappa,Hult2019c_kappa}, while here we focus on distributing the computations and reducing the amount of required communication. 

\begin{figure}[t]
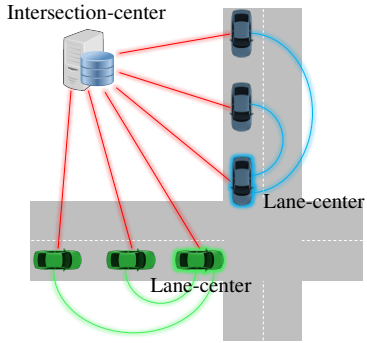

	\centering
	\begin{overpic}[width=0.5\linewidth]{illustrations/centCom.pdf}
		\put(-7,97){\footnotesize Intersection-center}
		\put(70,40){\footnotesize Lane-center}
		\put(36,15){\footnotesize Lane-center}
	\end{overpic}
	\caption{Illustration of distribution structure}\label{fig:practicalScenario}
\end{figure}

The algorithm in \cite{Hult2016_kappa} did not account for rear-end collisions between vehicles on the same lane, and required the solution of a non-smooth Nonlinear Program (NLP).
In this paper, we solve the fixed-order coordination problem by relying on Primal-Dual Interior-Point (PDIP) algorithms. Our main contribution is to make it possible to compute in a distributed manner the same steps that a centralized solver would take. 
As in \cite{Hult2016_kappa,Hult2018c_kappa}, this approach is partly centralized, and relies on central units for some computations.
In particular, the algorithm uses one intersection-wide central unit and one central unit for each lane, with communication flows as illustrated in \fig{\ref{fig:practicalScenario}}. We stress that these central units need not be physically separated from the vehicles, but a subset of the vehicles could be selected to also perform the computations of these central units.

\paragraph*{Main Contributions}
By building on ideas similar to \cite{Khosfetrat2017_kappa}, \cite{Gondzio2007}, 
we propose a way to distribute computations of PDIP schemes for nonconvex NLPs tailored to the intersection problem and our approach can be implemented in any existing PDIP solver. Furthermore, we analyze the communication requirements and propose an approach to reduce them while incurring an essentially negligible loss of optimality.

\paragraph*{Outline}
The remainder of the Paper is organized as follows.
In Section~\ref{sec:problemStatement} we model and state the intersection problem using an optimal control formalism.
In Section~\ref{sec:distributedIP} we review PDIP methods and outline how the computations can be parallelized.
In Section~\ref{sec:ap_kktSystem} we show in detail how the Karush-Kuhn-Tucker (KKT) system can be solved by splitting computations at the vehicle, lane and intersection level.
In Section~\ref{sec:ap_stepSize} we show how to select the step-size in a distributed fashion.
In Section~\ref{sec:ap_algorithm} we state a practical algorithm and provide a numerical example.
In Section~\ref{sec:ap_communication} we analyze the communication requirements and propose an approximate representation of the Rear-End Collision Avoidance (RECA) constraints, which significantly reduces the amount of data to be communicated.
The paper is concluded  in  Section~\ref{sec:ap_discussion}.

\section{Optimal Coordination at Intersections}\label{sec:problemStatement}

\begin{figure}[t]
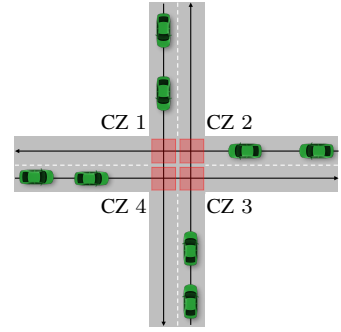

	\centering
\begin{overpic}[width=0.5\linewidth]{illustrations/czillustrationPaper.pdf}
	\put(25,64){\footnotesize CZ $1$}
	\put(54,64){\footnotesize CZ $2$}
	\put(54,41){\footnotesize CZ $3$}
	\put(25,41){\footnotesize CZ $4$}
\end{overpic}
\caption{Illustration of the scenarios considered, Assumption~\ref{ass:onRails} (black lines) and the Conflict Zones (red boxes).}\label{fig:ap_example}
\end{figure}

We consider intersection scenarios as shown in \fig{\ref{fig:ap_example}}, where $N$ vehicles approach an intersection with $L$ lanes, and make the following assumptions:
\begin{assumption}\label{ass:fullAuto}
There are no non-cooperative entities.
\end{assumption}
\begin{assumption}\label{ass:onRails}
The vehicles do not change lanes. 
\end{assumption}

Both assumptions are standard in the literature (see e.g. \cite{Dresner2008_kappa, Kowshik2011_kappa, Lee2012_kappa, Malikopolous2018_kappa}).
Assumption~\ref{ass:fullAuto} is introduced for the sake of simplicity and excludes the presence of, e.g., human-driven vehicles, pedestrians or bicycles. It is worth stressing that the proposed framework can be extended to accommodate for non-cooperative agents without major changes in the proposed algorithm: by modeling non-cooperative agents as uncertain systems, one can introduce additional constraints in the problem, e.g., following the approach of~\cite{Batkovic2018_kappa,Batkovic2019,Batkovic2020,Batkovic2021}.
Assumption~\ref{ass:onRails} is also introduced for simplicity and could be relaxed.
 While vehicles in general are entering and leaving the crossing area, in the following we focus on solving the problem at a given time and, for the sake of simplicity, we drop the dependence of some variables in time, e.g., the set of vehicles in each lane.
\paragraph*{Motion Models}
We describe the vehicle dynamics in continuous time as
\begin{subequations}\label{eq:dynamicsAndConstriants}
	\begin{align}
	\dot{{x}}_i(t)&=f_i({x}_i(t),{u}_i(t)),\label{eq:dynamics}\\
	0&\geq c_i({x}_i(t),{u}_i(t)) \label{eq:constraints},
	\end{align}
\end{subequations}
where $i$ is the vehicle index,  ${x}_i(t)\in\mathbb{R}^{n_i}$ and ${u}_i(t)\in\mathbb{R}^{m_i}$ are the vehicle state and control and we assume that $f_i$ is Lipschitz continuous in its first argument, such that~\eqref{eq:dynamicsAndConstriants} has a unique solution. Without loss of generality, we split the vehicle state as ${x}_i(t) = ({p}_i(t),{v}_i(t),\tilde{{x}}_i(t))$, where ${p}_i(t)$ is the position of the vehicle's geometrical center on the path describing the lane it is on, ${v}_i(t)$ is the velocity along the path and $\tilde{{x}}_i(t)$ collects (if any) all remaining states, e.g.,  acceleration and/or internal states of the powertrain. Vector $c_i$ lumps together all constraints, including, e.g., actuator limitations, lane keeping conditions, etc.
Both $f_i $ and $c_i$ are assumed to be twice differentiable.

\paragraph*{Side Collision Avoidance (SICA)}
Side collisions  can only occur between vehicles on different lanes, when these are inside a crossing area, i.e., where the lanes intersect.
We denote these areas \emph{Conflict Zones} (CZ), and note that more than one  pair $(i,j)$ can have potential collisions at a particular CZ. 
Collision avoidance consequently amounts to ensuring that vehicles on different lanes occupy each CZ  in a mutually exclusive fashion.
\begin{figure}[t]
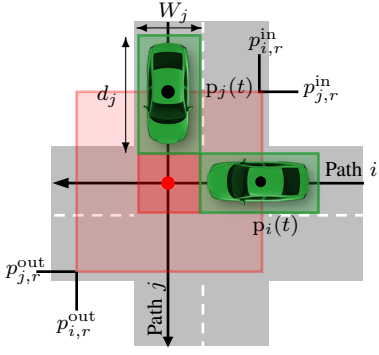

	\centering
	\begin{overpic}[width=0.5\linewidth]{illustrations/caBox.pdf}
		\put(59,93){\footnotesize $\pin{i,r}$}
		\put(-8,30){\footnotesize $\pout{j,r}$}
		\put(74,80){\footnotesize $\pin{j,r}$}
		\put(6,15){\footnotesize $\pout{i,r}$}
		\put(60,42){\footnotesize $\mathrm{p}_i(t)$}
		\put(47,80){\footnotesize $\mathrm{p}_j(t)$}
		\put(80,57){\footnotesize Path $i$}
		\put(31,13){\rotatebox{90}{\footnotesize Path $j$}}
		\put(25,80){\vector(0,1){15.5}}
		\put(25,80){\vector(0,-1){16.5}}
		\put(17,78){\footnotesize $d_j$}
		\put(37,98){\vector(1,0){9}}
		\put(37,98){\vector(-1,0){9}}
		\put(34,101){\footnotesize $W_j$}
	\end{overpic}
	\caption{Illustration of the elements used in the side collision avoidance conditions. $d_j$ and $W_j$ denotes the length and width of the vehicle, respectively. }\label{fig:sicaDef}
\end{figure}
In order to enforce this constraint, we introduce auxiliary variables for the time of entry and departure of each CZ, implicitly defined through
	\begin{align}
		\label{eq:inOutTimeDef}
		p_i(\tin{i,r}) &= \pin{i,r}, 		&%& \forall r&\in\mathcal{I}^\mathrm{CZ}_i,\\
		p_i(\tout{i,r}) &= \pout{i,r},  & \forall \, r &\in \mathcal{I}^\mathrm{CZ}_i,
	\end{align}
where $\mathcal{I}^\mathrm{CZ}_i$ collects the indices of the CZ crossed by vehicle~$i$, and 
$\pin{i,r}$ and $\pout{i,r}$ are the positions along the path at which vehicle $i$ enters and leaves CZ $r$, defined as shown in \fig{\ref{fig:sicaDef}}\footnote{In the event that $\mathrm{v}_i(\tin{i,r})=0$, $\tin{i,r}$ is not  uniquely defined by $\mathrm{p}_i(\tin{i,r}) = \pin{i,r}$. In this case, one can use the slightly more involved definition $\tin{i,r} = \min t \ \mathrm{s.t.} \ \mathrm{p}_i(\tin{i,r}) = \pin{i,r}$. Since $\dot{\mathrm{p}}(\tin{i,r})=0$ would be rarely encountered in practice, this is avoided for ease of presentation.}.
SICA is then enforced as
\begin{equation}\label{eq:basicSICA}
	\tout{i,r} \leq \tin{j,r} ,  \quad (i,j,r) \in \mathcal{I}^\mathrm{C},
\end{equation}
where  $\mathcal{I}^\mathrm{C}$ collects all triples $(i,j,r)$ of vehicles $i,j$ and CZ~$r$ where side collisions can occur, and implicitly encodes the crossing order, since by~\eqref{eq:basicSICA} vehicle $i$ crosses CZ $r$ before vehicle $j$.

 \paragraph*{Rear-End Collision Avoidance (RECA)}
Due to Assumption~\ref{ass:onRails}, rear-end collisions can only occur between two adjacent vehicles on the same lane. 
We state the necessary condition for RECA as
 \begin{equation}\label{pr:eq:recaNecessary} 
 	{p}_i(t) +\delta_{i} \leq {p}_{i+1}(t), \qquad i,i+1 \in\mathcal{I}^\mathrm{v}_l.
 \end{equation}
where ${p}_i$, ${p}_{i+1}$ are the first components of the corresponding state vectors $x_i$, $x_{i+1}$ and $\delta_{i}>0$ accounts for the vehicle length and introduces a safety distance between the two vehicles and $\mathcal{I}^\mathrm{v}_{l}$ denotes the set of all vehicles on lane $l$, which we assume to be ordered, such that vehicle $i+1$ precedes vehicle $i$. 

\paragraph*{Discretization}
We employ a direct formulation of the optimal coordination problem, using a piecewise constant parameterization of the inputs ${u}_i(t) = u_{i,k}$, $t \in [t_k, t_{k+1})$, $k=1, \hdots, K-1$, where $K$ is the prediction horizon and $t_k = k\Delta t$.
We consider a multiple-shooting discretization of the dynamics \eqref{eq:dynamics}, enforcing
\begin{subequations}
	\label{eq:discreteDynamics}
	\begin{align}
		x_{i,0} & =\hat x_{i,0} \\
		x_{i,k+1}&= F_i(x_{i,k},u_{i,k},\Delta t),  & k &=0, \hdots, K-1,
	\end{align}
\end{subequations}
where $\hat x_{i,0}$ is the initial state of vehicle $i$, and $F_i(x_{i,k},u_{i,k},\Delta t)$ 
denotes the solution to  \eqref{eq:dynamics} at time $t=t_{k}+\Delta t$, with initial condition ${x}_i(t_k)=x_{i,k}$ and control $u_{i,k}$.
The state and control trajectories ${x}_i(t)$ and ${u}_i(t)$ are thereby described by ${x}_i$ and ${u}_i$, which we collect in vector ${w}_{i}=({x}_{i,0},u_{i,0}, \hdots, u_{i,K-1},x_{i,K})$. Note that the state values at time $t\neq t_k$ can be obtained from the same numerical routines used to evaluate~\eqref{eq:discreteDynamics}.
We denote the position ${p}_i(t)$ at time $t$ as the function 
\begin{equation}\label{eq:positionFunctionDiscretization}
p_i({w}_{i},t) :=F_{i,p}(x_{i,k},u_{i,k},t-t_k), \quad k = \lfloor t/\Delta t \rfloor,
\end{equation}
where $F_{i,p}$ denotes the position component of $F_i$, and we introduce $p_i({w}_{i},t)$ for ease of notation, even though this function only depends on $x_{i,k}$, $u_{i,k}$. 
Consequently, all times $\tin{i,r}$, $\tout{i,r}$ are continuous functions of $w_i$ implicitly defined as
\begin{subequations}
	\label{eq:discreteTdef}
	\begin{align}
		p_i({w}_{i},\tin{i,r}) &= \pin{i,r}, 	& \forall \, r &\in \mathcal{I}^\mathrm{CZ}_i,\\
		p_i({w}_{i},\tout{i,r}) &= \pout{i,r},  & \forall \, r&\in \mathcal{I}^\mathrm{CZ}_i.
	\end{align}
\end{subequations}

Finally, as customary in direct multiple shooting, we relax constraints \eqref{eq:constraints} and the RECA constraints~\eqref{pr:eq:recaNecessary} by enforcing them only at times $t_k$:
\begin{align}
	c_i(x_{i,k},u_{i,k}) & \leq 0,  && k = 0, \hdots, K,  \label{eq:discretePathCon}\\
	h^\mathrm{r}(p_{i},p_{i+1}) & \leq 0,  && i,i+1 \in\mathcal{I}^\mathrm{v}_l, \ l  \in \mathcal{I}^\mathrm{L}, \label{eq:discreteRECA}
\end{align}
where $h^\mathrm{r}(p_{i},p_{j}) := \left ( p_{i,k} + \delta_{i} - p_{i+1,k}, \ k \in \{0,\ldots,K \} \right )$.

\paragraph*{Optimal Control Formulation}
We define the set of all lanes $\mathcal{I}^\mathrm{L} = \{1, \hdots, L\}$; the set of all vehicles $\mathcal{I}^\mathrm{v} = \{1, \hdots, N\}=\bigcup\,  \mathcal{I}^\mathrm{v}_{l}$; and denote as $l(i)$ the lane of vehicle~$i$.
Variables $w_i$ collect the state and control trajectory: these are the only variables present when formulating an Optimal Control Problem (OCP) for a single vehicle. Since we formulate the SICA using additional time variables, we further define the in-out times $T_{i,r}=(\tin{i,r},\tout{i,r})$, relative to vehicle $i$ and CZ $r$. Then, for each vehicle we define the vector of all in-out times $T_{i}=\left (T_{i,1},\ldots,T_{i,\left |\mathcal{I}^\mathrm{CZ}_i\right |}\right ) \in \mathbb{R}^{n_{T_i}}$, which contains the in-out times for all CZ crossed by vehicle $i$. We lump all these variables together to obtain the primal variables associated with vehicle $i$ as $y_{i}=(w_{i},T_{i})$ and denote all primal variables as $y= (y_{i}, \ldots, y_{N})$. 
The optimal intersection coordination problem is then given by the NLP
\begin{subequations}\label{eq:fixedOrderProblem}
	\begin{align}
	\underset{y}{\min} & \quad \sum_{i=1}^{N}J_{i}(w_{i}) &&& \label{eq:fixedOrderProblem_objective}\\
	\mathrm{s.t.} 
	& \quad  g_{i}(w_i,T_i) = 0, &i &\in \mathcal{I}^\mathrm{v}, &&|\, \lambda_i, \label{eq:fixedOrderProblem_eqCon}\\
	& \quad  h_{i}^\mathrm{P}(w_i) \leq 0, & i &\in \mathcal{I}^\mathrm{v}, &&|\, \mu_i^\mathrm{P}, \label{eq:fixedOrderProblem_ineqCon}\\
	& \quad h^\mathrm{L}_l \left (p^\mathrm{L}_l \right ) \leq 0, &
	\ l & \in \mathcal{I}^\mathrm{L},  &&|\, \mu_{l}^\mathrm{L}
	\label{eq:fixedOrderProblem_RECAConstraints} \\
	& \quad h^\mathrm{C}(T) \leq 0, &&&&|\, \mu^\mathrm{C},  \label{eq:fixedOrderProblem_SICAConstraints}
	\end{align}
\end{subequations}
where $g_{i}(w_i,T_i)$ collects constraints \eqref{eq:discreteDynamics} and \eqref{eq:discreteTdef} for vehicle~$i$; $h_{i}^\mathrm{P}({y}_i) \leq 0$ collects path constraints \eqref{eq:discretePathCon} for vehicle~$i$; $h^\mathrm{C}(T) \leq~0$ collects SICA constraints \eqref{eq:basicSICA}; 
and
\begin{equation*}
	h^\mathrm{L}_l \left (p^\mathrm{L}_l \right ) := \left ( h^\mathrm{r}(p_i,p_{i+1}), \ \ i,i+1 \in\mathcal{I}^\mathrm{v}_l \right ),
\end{equation*} 
collects RECA constraints \eqref{eq:discreteRECA}, where we define $p^\mathrm{L}_l=(p_i, \ i \in \mathcal{I}^\mathrm{v}_l)$.
At the right of each constraint, we wrote the corresponding Lagrange multiplier.

The objective function takes the form
\begin{equation}
	J(y) = \sum_{i=1}^N J_{i}(w_{i})= \sum_{i=1}^N V_i^\mathrm{f}(x_{i,N})+\sum_{k=0}^{K-1} \ell_i(x_{i,k},u_{i,k}),
\end{equation}
with twice differentiable terminal cost $V_i^\mathrm{f}$ 
and stage cost $\ell_i$. The functions appearing in the cost are intentionally left undefined, since our approach can handle any cost function, the definition of which is problem dependent. Examples of a possible cost function include tracking costs, as in Section~\ref{sec:ap_example} and costs related to fuel consumption, as in, e.g.,~\cite{Hult2018a_kappa}. 
We observe that, since the SICA and RECA constraints are in general nonconvex,~\eqref{eq:fixedOrderProblem} is a nonconvex NLP. In the following, we assume that all functions are twice differentiable, such that derivative-based optimization algorithms can be applied. Note that this is a mild assumption in the case of CAVs.

\begin{remark}
We present Problem~\eqref{eq:fixedOrderProblem} without turning vehicles for simplicity,  but the same formalism can be deployed also in that case, as shown in \cite{Hult2019a_kappa}.
Additionally, both \eqref{eq:basicSICA} and \eqref{pr:eq:recaNecessary} can also be defined with state-dependent safety margins. 
Such details are omitted for the sake of simplicity.
\end{remark}

\begin{remark}
	In order to solve the nonconvex NLP using derivative-based approaches, we assume that all functions in~\eqref{eq:fixedOrderProblem} are continuously differentiable. Twice continuous differentiability is usually assumed for simplicity but is not strictly necessary.
\end{remark}

\section{Primal-Dual Interior Point Method}\label{sec:distributedIP}
In this paper, we focus on PDIP algorithms, which solve NLPs by relying on a smooth approximation of the KKT conditions.
By driving the smoothing (barrier) parameter to zero, a sequence of primal-dual approximations is obtained which converges to a local minimum of the NLP under mild conditions~\cite{Nocedal2006_kappa}.
Since our contribution consists in proposing a way of distributing the computations of each PDIP iteration, we briefly recall next how PDIP methods solve an NLP.

Collecting \eqref{eq:fixedOrderProblem_eqCon} in $g(y)$ and \eqref{eq:fixedOrderProblem_ineqCon}-\eqref{eq:fixedOrderProblem_SICAConstraints} in $h(y)$, the PDIP KKT conditions
of Problem \eqref{eq:fixedOrderProblem}  are
\begin{subequations}
\begin{align}
\nabla_{y}\mathcal{L} & = 0, \label{eq:pdipStat} \\
g(y) & = 0, \\
h(y)+s & = 0, \\
D(s)\mu- \mathbf{1}\tau &= 0, \label{eq:pdipComp}\\
\mu  &\geq 0,  \label{eq:pdipMu} \\
s  &\geq 0, \label{eq:pdipS}
\end{align}
\end{subequations}
where, $s$ is a slack variable associated with $h$, $\mathrm{D}(s)$ is a diagonal matrix built from $s$, $\tau \in \mathbb{R}_+$ is the barrier parameter, and $\nabla_{y} \mathcal{L}$ is the gradient of the Lagrangian function
\begin{equation}\label{eq:pdipLagrangeFunction}
	\mathcal{L}(y,\lambda,\mu,s) = J(y)+\lambda^\top g(y) + \mu^\top (h(y)+s),
\end{equation}
where $\lambda$ and $\mu$ are the Lagrange multipliers associated with constraints $g$ and $h$ respectively. 
Collecting primal-dual variables in $z=(y,\lambda,\mu,s)$, we write \eqref{eq:pdipStat}-\eqref{eq:pdipComp} as $ r_\tau(z)=0$.

Starting from an initial guess $z^{[0]}$ strictly satisfying \eqref{eq:pdipMu},~\eqref{eq:pdipS},  the sequence of primal-dual solution candidates is generated through the iteration
\begin{align}\label{eq:pdip_stepSizeIntro}
z^{[\kiter+1]} &= z^{[\kiter]} + \alpha^{[\kiter]}\Delta z^{[\kiter]},
\end{align}
where $\kiter$ is the iteration index,  $\alpha^{[\kiter]}$ the \textit{step size} and  $\Delta z^{[\kiter]}$ the \textit{search direction}, obtained as the solution of the KKT-system
\begin{equation}\label{eq:pdip_centralKKTSystemIntro}
 M\left (z^{[\kiter]}\right )\Delta z^{[\kiter]} = -  r_ {\tau^{[\kiter]}}\left (z^{[\kiter]}\right ).
\end{equation}
The KKT matrix $ M(z)$ is constructed from $\nabla_z r_\tau$, evaluated at $z^{[\kiter]}$. Typically one replaces the Lagrangian Hessian $\nabla^2_{yy}\mathcal{L}$ with an approximation $B$ to ensure that the reduced Hessian is positive-definite~\cite{Nocedal2006_kappa}.
The step size $\alpha^{[\kiter]}$ is selected such that the updated solution candidate $z^{[\kiter+1]}$ strictly satisfies  \eqref{eq:pdipMu}, \eqref{eq:pdipS} and provides sufficient decrease for a suitably selected merit function $\phi(z)$.
Finally, as the algorithm converges, an update strategy $\varphi$, enforcing $\tau^{[\kiter]}\rightarrow 0$ is used to update the barrier parameter as
\begin{equation}
	\tau^{[\kiter+1]} = \varphi\left (\tau^{[\kiter]},z^{[\kiter]}\right ).
\end{equation}

If the PDIP algorithm is applied in a fully centralized setting, the linear system \eqref{eq:pdip_centralKKTSystemIntro} is solved using a standard linear algebra routine. 
The information needed to assemble $M\left (z^{[\kiter]}\right )$ and $r_{\tau^{[\kiter]}}\left (z^{[\kiter]}\right )$ must thus be made available centrally before the search-direction $\Delta z^{[\kiter]}$ can be found.

The focus of this paper is on the solution of \eqref{eq:pdip_centralKKTSystemIntro} in a distributed fashion, by exploiting the structure of Problem \eqref{eq:fixedOrderProblem} to perform most computations independently for each vehicle and  for each lane.
Additionally, also the evaluation of the merit function can be split, allowing also the step size $\alpha^{[\kiter]}$ to be selected in a distributed fashion.
In the following sections, we detail how computations are distributed and how the information is exchanged between  the vehicles, lane centers and intersection center.

\section{Solving the PDIP-system}\label{sec:ap_kktSystem}

In this section, we construct $r_{\tau^{[\kiter]}}\left (z^{[\kiter]}\right )$ and $M\left (z^{[\kiter]}\right )$ for  Problem \eqref{eq:fixedOrderProblem} in a way which makes it possible to perform the iterates of PDIP solvers in a distributed way which is tailored to the optimal coordination of CAVs at intersections. 
Since the dimension of each slack variable $s$ is intrinsically related to a specific multiplier $\mu$, we label and index $s$ in the same way as $\mu$.
For the sake of simplicity we omit in the following the iteration index $\kiter$ and dependence on $\tau$. 

\subsection{KKT Residual $r$}
We arrange the KKT residual as $r=\left (r^\mathrm{v},r^\mathrm{L},r^\mathrm{C}\right )$ and partition the optimization variable as $z=\left (z^\mathrm{v},z^\mathrm{L},z^\mathrm{C}\right )$, where v, L, C refer to the vehicles, the lane centers and a central node. Though all components of the vector depend on each other, one can intuitively think of $r^\mathrm{v}$ as the KKT conditions relative to all vehicles;  $r^\mathrm{L}$ as the KKT conditions relative to the lanes, enforcing RECA; and $r^\mathrm{C}$ as the KKT conditions of the central node, enforcing SICA and defining the schedule. 

We define $z^\mathrm{v}=\left (z^\mathrm{v}_1,\ldots,z^\mathrm{v}_N\right )$, with $z^\mathrm{v}_i:=\left (y_i,\lambda_i,\mu_i^\mathrm{P},s_i^\mathrm{P}\right )$, and write $r^\mathrm{v}=\left (r^\mathrm{v}_1,\ldots,r^\mathrm{v}_N\right )$, with
\begin{align*}
	r^\mathrm{v}_i&:=\left (\nabla_{z_i^\mathrm{v}} \mathcal{L}, \ \mu^\mathrm{P}_{i} + \mathrm{D}\left (s_{i}^\mathrm{P}\right ) ^{-1}\mathbf{1}\tau\right ).
\end{align*}
If all vehicles were on separate lanes and there were no intersection, then all $r^\mathrm{v}_i$ could be solved independently and $\left (r^\mathrm{L},r^\mathrm{C}\right )$ would have dimension $0$. The coupling between vehicles is due to multipliers $\mu^\mathrm{L},\mu^\mathrm{C}$, which appear inside $\nabla_{z_i^\mathrm{v}} \mathcal{L}$. 
Note that $\nabla_{y_i^\mathrm{v}} \mathcal{L}=\left (\nabla_{w_i^\mathrm{v}} \mathcal{L},\nabla_{T_i^\mathrm{v}} \mathcal{L}\right )$ with %
\begin{subequations}
	\label{eq:rv}
	\begin{align} 
		\hspace{-0.3em}\nabla_{w_{i}} \mathcal{L} &= \nabla_{w_{i}} J_{i} + \nabla_{w_{i}}g_{i}\lambda_{i} + \nabla_{w_{i}}h^\mathrm{P}_{i}\mu^\mathrm{P}_{i} + \nabla_{w_i} h^\mathrm{L} \mu^\mathrm{L},  \\
		\hspace{-0.3em}\nabla_{T_{i}} \mathcal{L} &=  \nabla_{T_{i}}g_{i}\lambda_{i} + \nabla_{T_i}h^\mathrm{C}\mu^\mathrm{C},
	\end{align}
\end{subequations}
where $\nabla_{w_i} h^\mathrm{L}$ and $\nabla_{T_i}h^\mathrm{C}$ are sparse matrices composed of $1$ and $0$.

We define $z^\mathrm{L}:=\left (z^\mathrm{L}_1,\ldots,z^\mathrm{L}_L\right )$, with $z^\mathrm{L}_l:=\left (\mu^\mathrm{L}_l,s^\mathrm{L}_l\right )$, and $r^\mathrm{L}:=\left (r^\mathrm{L}_1,\ldots,r^\mathrm{L}_L\right )$ with 
\begin{align}
	\label{eq:rL}
	r^\mathrm{L}_l &:= \left (  
	h^\mathrm{L}_l\left (p^\mathrm{L}_l\right ) + s^\mathrm{L}_j
	, \ \mu^\mathrm{L}_l + \mathrm{D}\left (s^\mathrm{L}_l\right )^{-1}\mathbf{1}\tau\right ), 
\end{align}
Essentially, $r^\mathrm{L}_l$ imposes the RECA constraint on lane $l$.

Finally, we define $z^\mathrm{C}:=\left (\mu^\mathrm{C},s^\mathrm{C}\right )$, and
\begin{align}
	\label{eq:rS}
	r^\mathrm{C} := \left ( h^\mathrm{C}(T) + s^\mathrm{C}, \ \mu^\mathrm{C} + \mathrm{D}\left (s^\mathrm{C}\right )^{-1}\mathbf{1}\tau\right ),
\end{align}
imposing the SICA constraints.

\subsection{KKT Matrix $M$}

The KKT matrix is displayed schematically in Figure~\ref{fig:KKTMatrix} and an example is displayed in Figure~\ref{fig:pdip_exampleKKTMatrix}; it reads as
\begin{align}
	M=\begin{bmatrix}
	M^\mathrm{v} & M^\mathrm{vL} & M^\mathrm{vC} \\
	M^\mathrm{Lv} & M^\mathrm{L} \\
	M^\mathrm{Cv} & & M^\mathrm{C}
	\end{bmatrix},
\end{align}
where all $0$ entries are left empty for the sake of readability.

The top left block is $M^\mathrm{v}:=\mathrm{blockdiag}\left (M^\mathrm{v}_{1}, \hdots, M^\mathrm{v}_{N}\right )$, with 
\begin{multline*}
M^\mathrm{v}_{i} = 
\begin{bmatrix}
B_i & \nabla_{y_{i}} g_{i}& \nabla_{y_{i}}h_{i} & \\
\nabla_{y_{i}} {g_{i}}^\top & \\
\nabla_{y_{i}} {h_{i}}^\top & & & I \\
& & I & \mathrm{D}(s_{i})^{-1}\mathrm{D}(\mu_{i})
\end{bmatrix},
\end{multline*}
where $B_i$, $\nabla_{y_{i}} g_{i}$ and $\nabla_{y_{i}} h_{i}$ are highly sparse with the structure typical of NLPs arising in direct OC. 
Consequently, each block $M^\mathrm{v}_{i}$ can be factorized independently from the others and efficient solvers tailored to direct OC can be used~\cite{Frison2014_kappa,Domahidi2014_kappa}.

The block below $M^\mathrm{v}$ is
\begin{align*}
	M^\mathrm{Lv}:=\nabla_{z^\mathrm{v}}r^\mathrm{L}=\left (\nabla_{z^\mathrm{v}}r^\mathrm{L}_{1}, \hdots, \nabla_{z^\mathrm{v}}r^\mathrm{L}_{L}\right )=\left (M^\mathrm{Lv}_1,\ldots,M^\mathrm{Lv}_L\right ),
\end{align*}
which is block sparse, since
\begin{align*}
	M^\mathrm{Lv}_l = 
	\begin{bmatrix}
		M^\mathrm{Lv}_{l,1}, \ldots, M^\mathrm{Lv}_{l,N}
	\end{bmatrix}, && \text{with} \  M^\mathrm{Lv}_{l,i} = 0  \ \text{if} \ i \notin \mathcal{I}^\mathrm{L}_l.
\end{align*}
The fact that most entries are $0$ is best understood by noting that $M^\mathrm{Lv}$ encodes RECA constraints~\eqref{eq:discreteRECA}. Therefore, each row has only two nonzero elements equal to $\pm 1$, corresponding to position variables from $2$ vehicles on that lane, see also Equation~\eqref{eq:rL}. We will exploit this fact when analyzing the communication requirements, see Table~\ref{tab:pdip_communicationSummary}. 
Finally, $M^\mathrm{vL}:=\nabla_{z^\mathrm{L}}r^\mathrm{v}={M^\mathrm{Lv}}^\top$. We will also index this matrix by vehicle, i.e., $M^\mathrm{vL}_i := \begin{bmatrix}
M^\mathrm{vL}_{i,1}, \ldots, M^\mathrm{vL}_{i,L}
\end{bmatrix}$ only includes the rows of $M^\mathrm{vL}$ corresponding to $r^\mathrm{v}_i$.

Block $M^\mathrm{Cv}:=\nabla_{z^\mathrm{v}}r^\mathrm{C}$ is also block sparse with all nonzero elements equal to either $1$ or $-1$, since it encodes the SICA constraints~\eqref{eq:basicSICA}: we will exploit this fact when analyzing the communication requirements, see Table~\ref{tab:pdip_communicationSummary}.  The  sparsity pattern of $M^\mathrm{Cv}$ depends on the CZ crossed by each vehicle and the crossing order. In the simple case of $1$ CZ, and an appropriate vehicle ordering, this matrix can be made block diagonal with the introduction of auxiliary variables, see, e.g.,~\cite{Hult2020}. Finally, $M^\mathrm{vC}:=\nabla_{z^\mathrm{C}}r^\mathrm{v}={M^\mathrm{Cv}}^\top$. We will denote $M^\mathrm{vC}_i$ the rows corresponding to vehicle $i$, i.e., to $r^\mathrm{v}_i$.

The two remaining blocks on the diagonal are $M^\mathrm{L} =\mathrm{blockdiag}\left (M^\mathrm{L}_{1}, \hdots, M^\mathrm{L}_{L}\right )$, with 
\begin{align*}
	M^\mathrm{L}_{l} &= 
	\begin{bmatrix}
	0 & I \\
	I & \mathrm{D}(s_{l})^{-1}\mathrm{D}(\mu_{l})
	\end{bmatrix}; 
\end{align*}
and
\begin{align*}
M^\mathrm{C} &= 
\begin{bmatrix}
0 & I \\
I & \mathrm{D}(s^\mathrm{C})^{-1}\mathrm{D}(\mu^\mathrm{C})
\end{bmatrix}.
\end{align*}

\subsection{Solving the KKT-system}

\begin{figure}[t]
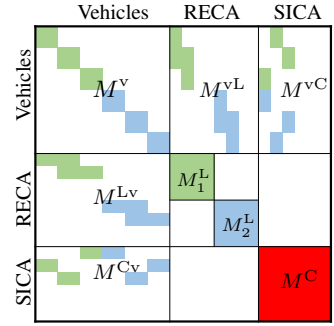

	\centering
	\begin{overpic}[width=0.45\linewidth]{illustrations/KKTmatrix.pdf}
		\put(14.5,102){\footnotesize Vehicles}
		\put(-6,68){\rotatebox{90}{\footnotesize Vehicles}}
		\put(-6,32.5){\rotatebox{90}{\footnotesize RECA}}
		\put(-6,6){\rotatebox{90}{\footnotesize SICA}}
		\put(50,102){\footnotesize RECA}
		\put(80,102){\footnotesize SICA}
		\put(19.5,40){\footnotesize $M^\mathrm{Lv}$}
		\put(19.5,15){\footnotesize $M^\mathrm{Cv}$}
		\put(55,76){\footnotesize $M^\mathrm{vL}$}
		\put(82,76){\footnotesize $M^\mathrm{vC}$}
		\put(82,12){\footnotesize $M^\mathrm{C}$}
		\put(47.5,46){\scriptsize $M^\mathrm{L}_1$}
		\put(62.5,31){\scriptsize $M^\mathrm{L}_2$}
		\put(19.5,76){\small $M^\mathrm{v}$}
	\end{overpic}
	\caption{Illustration of the KKT-matrix of \eqref{eq:fixedOrderProblem}. Blue and green differentiate vehicles on different lanes.}\label{fig:KKTMatrix}
\end{figure}

We  show next how the computations involved in solving the KKT system 
\begin{align}
	\label{eq:KKTsystem_structure}
	\begin{bmatrix}
	M^\mathrm{v} & M^\mathrm{vL} & M^\mathrm{vC} \\
	M^\mathrm{Lv} & M^\mathrm{L} \\
	M^\mathrm{Cv} & & M^\mathrm{C}
	\end{bmatrix}
	\begin{bmatrix}
	\Delta z^\mathrm{v} \\
	\Delta z^\mathrm{L} \\
	\Delta z^\mathrm{C}
	\end{bmatrix}
	=
	-
	\begin{bmatrix}
	r^\mathrm{v} \\
	r^\mathrm{L} \\
	r^\mathrm{C}
	\end{bmatrix},
\end{align}
can be split between the vehicle, lane and intersection levels of the problem. Note that this computation is common to all PDIP solvers, such that our approach can in principle be implemented in any existing solver.

\begin{proposition}
	\label{prop:lin_sys_solve}
	The KKT system \eqref{eq:KKTsystem_structure} can be solved as the following sequence of equations
	\begin{subequations}
		\label{eq:hierarchicalKKT}
		\begin{align}
		&\hspace{-2pt}\left (\hspace{-1pt}\bar M^\mathrm{C}-\bar M^\mathrm{LC}\hspace{-2pt} \left (\bar M^\mathrm{L}\right )^{-1} \hspace{-2pt} \bar M^\mathrm{CL}\hspace{-1pt}\right )\hspace{-1pt} \Delta z^\mathrm{C} = -\bar r^\mathrm{C}+\bar M^\mathrm{LC}\hspace{-2pt}\left (\bar M^\mathrm{L}\right )^{-1} \hspace{-2pt}\bar r^\mathrm{L}, \label{eq:intersectionEquations} \hspace{-10em}&&\hspace{10em} \\
		&\bar M^\mathrm{L}_l \Delta z^\mathrm{L}_l  = -\bar r^\mathrm{L}_l - \bar M^\mathrm{LC}_l\Delta z^\mathrm{C}, && \forall \, j \in \mathcal{I}^\mathrm{L} \label{eq:laneEquations}, \\
		&M^\mathrm{v}_{i} \Delta z^\mathrm{v}_{i}  =  -r^\mathrm{v}_{i} -
		\begin{bmatrix}
		M^\mathrm{vL}_i & M^\mathrm{vC}_i
		\end{bmatrix}
		\begin{bmatrix}
		\Delta z^\mathrm{L} \\
		\Delta z^\mathrm{C}
		\end{bmatrix},&& \forall i \,\in \mathcal{I}^\mathrm{v}, \label{eq:vehicleEquations}
		\end{align}
	\end{subequations}
	where we define
	\begin{subequations}
		\label{eq:Mrbar}
		\begin{align}
			\bar M^\mathrm{L} &:= \mathrm{blockdiag}\left (\bar M^\mathrm{L}_{1}, \hdots, \bar M^\mathrm{L}_{L}\right ), \label{eq:Mbar:L}\\
			\bar M^\mathrm{L}_l &:= M^\mathrm{L}_l - \sum_{i\in\mathcal{I}^\mathrm{v}_l}M^\mathrm{Lv}_{l,i} \left (M^\mathrm{v}_i\right )^{-1} M^\mathrm{vL}_{i,l}, \label{eq:Mbar:Li}\\
			\bar M^\mathrm{LC} &:= \begin{bmatrix}
			\bar M^\mathrm{LC}_1,\ldots,\bar M^\mathrm{LC}_L
			\end{bmatrix},\\
			\bar M^\mathrm{LC}_l &:= - \sum_{i\in\mathcal{I}^\mathrm{v}_l} M^\mathrm{Lv}_{l,i} \left (M^\mathrm{v}_i\right )^{-1}M^\mathrm{vC}_{i},\\
			\bar M^\mathrm{C} &:= M^\mathrm{C} - \sum_{i=1}^N M^\mathrm{Cv}_i \left (M^\mathrm{v}_i\right )^{-1}M^\mathrm{vC}_i,\label{eq:Mbar:Ci} \\
			\label{eq:rbar_0}
			\bar r^\mathrm{C} &:= r^\mathrm{C} - \sum_{i=1}^N M^\mathrm{Cv}_{i} \left (M^\mathrm{v}_i\right )^{-1} r^\mathrm{v}_i, \\
			\bar r^\mathrm{L} &:= (\bar r^\mathrm{L}_1,\ldots,\bar r^\mathrm{L}_L), \\
			\bar r^\mathrm{L}_l &:= r^\mathrm{L}_l - \sum_{i\in\mathcal{I}^\mathrm{v}_l} M^\mathrm{Lv}_{l,i}\left (M^\mathrm{v}_i\right )^{-1} r^\mathrm{v}_i,\label{eq:rbar_end}
		\end{align}
		and
		\begin{align}
			\bar M^\mathrm{LC} \left (\bar M^\mathrm{L}\right )^{-1} \bar M^\mathrm{CL} &:= \sum_{l=1}^L \bar M^\mathrm{LC}_l \left (\bar M^\mathrm{L}_l\right )^{-1} \bar M^\mathrm{CL}_l, \label{eq:Mbar:CL1} \\
			\bar M^\mathrm{LC}\left (\bar M^\mathrm{L}\right )^{-1}\bar r^\mathrm{L} &:= \sum_{l=1}^L \bar M^\mathrm{LC} \left (\bar M^\mathrm{L}_l\right )^{-1}\bar r^\mathrm{L}_l. \label{eq:Mbar:CL2}
		\end{align}
	\end{subequations}
\end{proposition}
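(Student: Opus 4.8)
The plan is to prove Proposition~\ref{prop:lin_sys_solve} by recognizing that the claimed sequence of equations \eqref{eq:hierarchicalKKT} is nothing more than a block Gaussian elimination (Schur-complement reduction) of the KKT system \eqref{eq:KKTsystem_structure}, carried out in a specific order: first eliminate the vehicle block $\Delta z^\mathrm{v}$, then the lane block $\Delta z^\mathrm{L}$, solve the resulting reduced system for $\Delta z^\mathrm{C}$, and finally back-substitute. The key structural facts I would rely on are that $M^\mathrm{v}=\mathrm{blockdiag}(M^\mathrm{v}_1,\dots,M^\mathrm{v}_N)$ and $M^\mathrm{L}=\mathrm{blockdiag}(M^\mathrm{L}_1,\dots,M^\mathrm{L}_L)$, both established in the previous subsection, so that all the ``inverses'' appearing in \eqref{eq:Mrbar} decompose into per-vehicle and per-lane pieces, and that the off-diagonal coupling blocks $M^\mathrm{vL}$, $M^\mathrm{vC}$ are correspondingly block-sparse (each vehicle couples only to its own lane and to the CZ it crosses).

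First I would write the system as the $2\times 2$ partitioned system with the $(1,1)$ block equal to $M^\mathrm{v}$ and the remaining variables $(\Delta z^\mathrm{L},\Delta z^\mathrm{C})$ grouped together. Eliminating $\Delta z^\mathrm{v}$ via its equation $M^\mathrm{v}\Delta z^\mathrm{v} = -r^\mathrm{v} - [M^\mathrm{vL}\ M^\mathrm{vC}][\Delta z^\mathrm{L};\Delta z^\mathrm{C}]$ — which, because $M^\mathrm{v}$ is block diagonal, is exactly the per-vehicle equation \eqref{eq:vehicleEquations} — and substituting into the other two block-rows produces a reduced system in $(\Delta z^\mathrm{L},\Delta z^\mathrm{C})$ whose matrix is
\begin{align*}
\begin{bmatrix}
M^\mathrm{L} - M^\mathrm{Lv}(M^\mathrm{v})^{-1}M^\mathrm{vL} & - M^\mathrm{Lv}(M^\mathrm{v})^{-1}M^\mathrm{vC} \\
- M^\mathrm{Cv}(M^\mathrm{v})^{-1}M^\mathrm{vL} & M^\mathrm{C} - M^\mathrm{Cv}(M^\mathrm{v})^{-1}M^\mathrm{vC}
\end{bmatrix}.
\end{align*}
Here is where the block-diagonality of $M^\mathrm{v}$ is used concretely: each product $M^\mathrm{Lv}(M^\mathrm{v})^{-1}M^\mathrm{vL}$ expands as $\sum_{i} M^\mathrm{Lv}_{\cdot,i}(M^\mathrm{v}_i)^{-1}M^\mathrm{vL}_{i,\cdot}$, and since $M^\mathrm{Lv}_{l,i}=0$ unless $i\in\mathcal{I}^\mathrm{v}_l$, the $(1,1)$ block is itself block diagonal over lanes with blocks exactly $\bar M^\mathrm{L}_l$ as in \eqref{eq:Mbar:Li}; similarly the $(1,2)$, $(2,1)$, $(2,2)$ blocks and the reduced right-hand side match $\bar M^\mathrm{LC}$, $\bar M^\mathrm{CL}$, $\bar M^\mathrm{C}$, $\bar r^\mathrm{L}$, $\bar r^\mathrm{C}$ in \eqref{eq:rbar_0}--\eqref{eq:rbar_end}. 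Then a second elimination step, removing $\Delta z^\mathrm{L}$ using the now-block-diagonal $\bar M^\mathrm{L}$, yields \eqref{eq:intersectionEquations} for $\Delta z^\mathrm{C}$; the decomposition of $\bar M^\mathrm{LC}(\bar M^\mathrm{L})^{-1}\bar M^\mathrm{CL}$ and $\bar M^\mathrm{LC}(\bar M^\mathrm{L})^{-1}\bar r^\mathrm{L}$ into sums over lanes, as in \eqref{eq:Mbar:CL1}--\eqref{eq:Mbar:CL2}, is again just block-diagonality of $\bar M^\mathrm{L}$. Back-substitution gives \eqref{eq:laneEquations} and then \eqref{eq:vehicleEquations}.

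The only real subtlety — and the step I would treat most carefully — is the well-posedness of the elimination, i.e.\ that $M^\mathrm{v}_i$, $\bar M^\mathrm{L}_l$, and the final intersection Schur complement are invertible so the inverses in \eqref{eq:Mrbar} make sense. For $M^\mathrm{v}_i$ this follows from the standard PDIP regularity assumptions (the Hessian approximation $B_i$ is chosen so the reduced Hessian is positive definite, full-rank constraint Jacobians, strict positivity of $s_i,\mu_i$ so that $\mathrm{D}(s_i)^{-1}\mathrm{D}(\mu_i)$ is well-defined), which the paper already invokes when citing \cite{Nocedal2006_kappa}; invertibility of $\bar M^\mathrm{L}_l$ and of the intersection-level matrix then follows because Schur complements of an invertible matrix with respect to invertible leading blocks are invertible, so nonsingularity of the full KKT matrix $M$ propagates down the hierarchy. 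I would state this as a standing assumption (consistent with the convergence theory referenced earlier) rather than re-derive it, and otherwise the proof is a bookkeeping exercise: carefully matching each term produced by the two Schur reductions against the definitions in \eqref{eq:Mrbar}, taking care with signs (the minus signs in $\bar M^\mathrm{LC}_l$ and $\bar r^\mathrm{C}$) and with the index restrictions $i\in\mathcal{I}^\mathrm{v}_l$ that make the sums sparse.
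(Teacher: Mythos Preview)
Your proposal is correct and follows essentially the same approach as the paper: a two-stage block Gaussian elimination (Schur-complement reduction), first eliminating $\Delta z^\mathrm{v}$ using the block-diagonal structure of $M^\mathrm{v}$, then eliminating $\Delta z^\mathrm{L}$ using the resulting block-diagonal structure of $\bar M^\mathrm{L}$, followed by back-substitution. The only notable difference is that you explicitly address invertibility of $M^\mathrm{v}_i$, $\bar M^\mathrm{L}_l$, and the intersection-level Schur complement, whereas the paper's proof simply proceeds with the elimination without justifying well-posedness.
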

\begin{proof}
	We proceed by first solving the KKT system with respect to $\Delta z^\mathrm{v}$, which yields
	\begin{align*}
		M^\mathrm{v} \Delta z^\mathrm{v}  =  -r^\mathrm{v} -
		\begin{bmatrix}
		M^\mathrm{vL} & M^\mathrm{vC}
		\end{bmatrix}
		\begin{bmatrix}
		\Delta z^\mathrm{L} \\
		\Delta z^\mathrm{C}
		\end{bmatrix}.
	\end{align*}
	Since $M^\mathrm{v}$ is block-diagonal, we further split this equation per vehicle to obtain~\eqref{eq:vehicleEquations}.  Since these equations require knowledge of $\Delta z^\mathrm{L}$, $\Delta z^\mathrm{C}$, they will have to be solved last.
	
	We now use~\eqref{eq:vehicleEquations} to replace 
	\begin{align}
		\label{eq:vehicle_explicit}
		\Delta z_i^\mathrm{v} = -\hbox{$M_i^\mathrm{v}$}^{^{-1}}r^\mathrm{v}_{i} - \hbox{$M_i^\mathrm{v}$}^{^{-1}}
		\begin{bmatrix}
		M^\mathrm{vL}_i & M^\mathrm{vC}_i
		\end{bmatrix}
		\begin{bmatrix}
		\Delta z^\mathrm{L} \\
		\Delta z^\mathrm{C}
		\end{bmatrix}
	\end{align}
	in the remaining equations. The first term in the right hand side of~\eqref{eq:vehicle_explicit} yields~\eqref{eq:rbar_0}-\eqref{eq:rbar_end}. The second term in the right hand side of~\eqref{eq:vehicle_explicit} yields~\eqref{eq:Mbar:L}-\eqref{eq:Mbar:Ci}. In equations~\eqref{eq:Mbar:Li} and~\eqref{eq:Mbar:Ci} the sum is restricted to $i\in\mathcal{I}_l^\mathrm{v}$ since $M_{i,l}^\mathrm{vL}=0$ for all $i\notin\mathcal{I}_l^\mathrm{v}$.
	
	We can now write~\eqref{eq:KKTsystem_structure} as
	\begin{align}
		\label{eq:KKTsystem_structure_reduced}
		\begin{bmatrix}
		\bar M^\mathrm{L\phantom{C}} & \bar M^\mathrm{LC} \\ \bar M^\mathrm{CL} & \bar M^\mathrm{C\phantom{L}}
		\end{bmatrix}
		\begin{bmatrix}
		\Delta z^\mathrm{L} \\
		\Delta z^\mathrm{C}
		\end{bmatrix}
		=
		-
		\begin{bmatrix}
		\bar r^\mathrm{L} \\
		\bar r^\mathrm{C}
		\end{bmatrix}.
	\end{align}
	We can then obtain~\eqref{eq:laneEquations} by eliminating $\Delta z^\mathrm{L}$ to obtain
	\begin{align*}
		\bar M^\mathrm{L} \Delta z^\mathrm{L}  = -\bar r^\mathrm{L} - \bar M^\mathrm{LC}\Delta z^\mathrm{C}.
	\end{align*}
	Since by~\eqref{eq:Mbar:L} $\bar M^\mathrm{L}$ is block diagonal, we can solve this equation separately for each lane, therefore obtaining~\eqref{eq:laneEquations}.
	
	Finally, Equation~\eqref{eq:intersectionEquations} is obtained by replacing
	\begin{align*}
		\Delta z^\mathrm{L}  = -\hbox{$\bar{M}^\mathrm{L}$}^{^{-1}}\bar r^\mathrm{L} - \hbox{$\bar{M}^\mathrm{L}$}^{^{-1}} \bar M^\mathrm{LC}\Delta z^\mathrm{C}
	\end{align*}
	in the second row of~\eqref{eq:KKTsystem_structure_reduced}. Note that also in~\eqref{eq:Mbar:CL1}-\eqref{eq:Mbar:CL2} we exploit the block-diagonal structure of $\bar M^\mathrm{L}$.
\end{proof}

Proposition~\ref{prop:lin_sys_solve} states that \eqref{eq:KKTsystem_structure} can be solved hierarchically, by first solving the intersection level, then the lane level and finally the vehicle level. We will discuss next where all computations are performed, since parts of the matrices and vectors are formed locally and assembled later. This is necessary both for communication and computational efficiency.

Each vehicle $i$ needs to factorize matrix $M^\mathrm{v}_i$, which has the structure typical of direct OC. This factorization can be done efficiently using, e.g.,~\cite{Frison2014_kappa,Domahidi2014_kappa}. 
Therefore, computing 
\begin{align}
	\label{eq:data_vehicle1}
	\mathcal{D}_{\mathrm{v}_i, \mathrm{L}_l} &:= \left (M^\mathrm{Lv}_{l,i}\left (M^\mathrm{v}_i\right )^{-1}M^\mathrm{vL}_{i,l}, \ M^\mathrm{Lv}_{l,i}\left (M^\mathrm{v}_i\right )^{-1} r^\mathrm{v}_i, \right .\nonumber \\
	& \hspace{104.5pt} \left. M^\mathrm{Lv}_{l,i} \left (M^\mathrm{v}_i\right )^{-1}M^\mathrm{vC}_{i}\right ), \\
	\mathcal{D}_{\mathrm{v}_i, \mathrm{C}} &:= \left (M^\mathrm{Cv}_i\left (M^\mathrm{v}_i\right )^{-1}M^\mathrm{vC}_i, \ M^\mathrm{Cv}_{i} \left (M^\mathrm{v}_i\right )^{-1} r^\mathrm{v}_i \right )	\label{eq:data_vehicle2}
\end{align}
can be done cheaply and efficiently.
Each vehicle can then communicate $\mathcal{D}_{\mathrm{v}_i, \mathrm{C}}$ to the central node and $\mathcal{D}_{\mathrm{v}_i, \mathrm{L}_{l(i)}}$ the lane center ${l(i)}$. Note that for $l\neq {l(i)}$, $\mathcal{D}_{\mathrm{v}_i, \mathrm{L}_l}$ is structurally zero. Afterwards, each lane center can compute
\begin{align}
	\label{eq:data_lane}
	\mathcal{D}_{\mathrm{L}_l, \mathrm{C}} := \left ( \bar M^\mathrm{LC}_l \left (\bar M^\mathrm{L}_l\right )^{-1} \bar M^\mathrm{CL}_l, \ \bar M^\mathrm{LC}_l \left (\bar M^\mathrm{L}_l\right )^{-1} \bar r^\mathrm{L}_l \right ).
\end{align}
Note that also these computations are relatively cheap, since the factorization of $\bar M^\mathrm{L}_l$ is needed in any case. Additionally, 
\begin{align*}
\bar M^\mathrm{L}_{l} &= 
\begin{bmatrix}
S^\mathrm{L}_l & I \\
%S^\mathrm{L} & I \\
I & \mathrm{D}(s_{l})^{-1}\mathrm{D}(\mu_{l}),
\end{bmatrix}
\end{align*}
where all blocks are diagonal except $S^\mathrm{L}_l$, which are dense. 

Once this information has been sent, the intersection center can start to solve the linear system by solving~\eqref{eq:intersectionEquations}. At this point, the intersection center needs to broadcast $\Delta z^\mathrm{C}$ to the lanes and vehicles. However, the sparsity of $\bar M^\mathrm{LC}_l$ entails that only the components of $\Delta \mu^\mathrm{S}$ relative to SICA constraints involving vehicles in lane $l$ are required, since all other components of $\Delta z^\mathrm{C}$ are multiplied by $0$: we denote these variables by $\Delta \mu^\mathrm{CL}_l$, which consist of $n_{T^\mathrm{L}_l}$ floats, where we define $n_{T^\mathrm{L}_l}:=\sum_{i\in\mathcal{I}^\mathrm{L}_l} n_{T_i}$.
The same can be stated about vehicle $i$, which only needs the components of $\Delta \mu^\mathrm{S}$ relative to SICA constraints involving vehicle $i$: we denote these variables as $\Delta \mu^\mathrm{Cv}_i$, which consist of $n_{T_i}$ floats.

Once each lane center receives $\Delta \mu^\mathrm{CL}_l$, it can solve~\eqref{eq:laneEquations}. Afterwards, each lane center needs to broadcast $\Delta z^\mathrm{L}_l$ to all vehicles on lane $l$. Similarly to the previous case, only the components of $\Delta \mu^\mathrm{L}$ relative to the RECA of vehicle $i$ are required: we denote these variables as $\Delta \mu^\mathrm{Lv}_i$. Since each vehicle has a rear and a front RECA at all times, this amounts to $2(K+1)$ floats. Once this information is available to the vehicles, they can solve~\eqref{eq:vehicleEquations}.

Algorithm~\ref{alg:ap_distributionLevel2} summarizes the procedure outlined above. Note the high degree of parallelizability: Lines~\ref{alg:ap_distributionLevel2:vehicleComp1} and \ref{alg:ap_distributionLevel2:vehicleComp2} are separable between the vehicles, and  Lines~\ref{alg:ap_distributionLevel2:laneComp1} and ~\ref{alg:ap_distributionLevel2:laneComp2} are  separable between the lanes.
Additionally, the factors of matrices $M^\mathrm{v}_{i}$, $\forall \, i \in \mathcal{I}^\mathrm{v}$ and $\bar M^\mathrm{L}_l$, $\forall \, l \in \mathcal{I}^\mathrm{L}$ computed on Lines~\ref{alg:ap_distributionLevel2:vehicleComp1} and \ref{alg:ap_distributionLevel2:laneComp1}  are stored  and reused on Lines~\ref{alg:ap_distributionLevel2:vehicleComp2} and \ref{alg:ap_distributionLevel2:laneComp2}, respectively. 

\begin{algorithm}[t]
	\caption{Distributed solution of KKT system. }
	\label{alg:ap_distributionLevel2}
	\begin{algorithmic}[1]
		\Procedure{SearchDirection}{$z$,$\tau$}
		\State $\forall \, i$: Compute $\mathcal{D}_{\mathrm{v}_i, \mathrm{C}}$,$\mathcal{D}_{\mathrm{v}_i, \mathrm{L}_{l(i)}}$, and pass to $\mathrm{C}$, $\mathrm{L}_{l(i)}$ \label{alg:ap_distributionLevel2:vehicleComp1}
		\State $\forall \, l$: Compute $\mathcal{D}_{\mathrm{L}_{l}, \mathrm{C}}$ and pass to $\mathrm{C}$ \label{alg:ap_distributionLevel2:laneComp1}
		\State $\mathrm{C}$: Solve \eqref{eq:intersectionEquations}, pass $\Delta \mu^\mathrm{CL}_{l}$, $\Delta \mu^\mathrm{Cv}_{i}$ to all $\mathrm{L}_l$, $\mathrm{v}_i$ \label{alg:ap_distributionLevel2:centralComp}
		\State $\forall \, l$: Solve \eqref{eq:laneEquations}, pass $\Delta \mu^\mathrm{Lv}_{l,i}$ to all $\mathrm{v}_i$, $i\in\mathcal{I}^\mathrm{L}_l$ 
		\label{alg:ap_distributionLevel2:laneComp2} 
		\State $\forall \, i$: Solve \eqref{eq:vehicleEquations} 
		\label{alg:ap_distributionLevel2:vehicleComp2}
		\EndProcedure
	\end{algorithmic}
\end{algorithm}

\subsection{Distributed Computation of the Step Size}\label{sec:ap_stepSize}
In this section, we discuss the selection of the step size $\alpha$ through a backtracking line search on a merit function where  most computations can be separated between, and computed in parallel within, the problem levels.

To ensure that $s^{[k+1]}>0$, $\mu^{[k+1]}>0$, we  employ the commonly used \emph{fraction from the boundary} rule, i.e., we select $\alpha\leq \alpha^\mathrm{max}$ satisfying:
	\begin{align}
	s + \alpha^\mathrm{max}\Delta s &\geq \kappa s,&
	\mu + \alpha^\mathrm{max} \Delta \mu &\geq \kappa s,
	\label{eq:pdip_fractionOfTheBoundary}
	\end{align}
where $\kappa>0$ is a parameter \cite{Nocedal2006_kappa}.
Due to the problem structure, \eqref{eq:pdip_fractionOfTheBoundary} can be evaluated separately for each vehicle, giving $\alpha_{\mathrm{v}_1}^{\mathrm{max}}$, $\forall \, i \in\mathcal{I}^\mathrm{v}$,
for the RECA constraints on a lane, giving $\alpha_{\mathrm{L}_l}^{\mathrm{max}}$, $\forall \, l \in \mathcal{I}^\mathrm{L}$ and for the SICA constraints $\alpha_{\mathrm{C}}^\mathrm{max}$.
The maximum allowed step size for the search direction $\Delta z$ is thereby
\begin{equation}\label{eq:pdip_amax}
\alpha^{\mathrm{max}} = \min\left (\alpha_{\mathrm{v}_1}^\mathrm{max}, \hdots, \alpha_{\mathrm{v}_N}^\mathrm{max}, \alpha_{\mathrm{L}_1}^\mathrm{max},\hdots,\alpha_{\mathrm{L}_L}^\mathrm{max},\alpha_{\mathrm{C}}^\mathrm{max}\right ).
\end{equation}

Additionally, for nonconvex NLPs one must ensure that $\alpha\leq \alpha^\mathrm{max}$ yields an improvement in the solution, typically 
by a backtracking line search on a suitable merit function~\cite{Nocedal2006_kappa}. In the following we consider the $\ell_1$ merit function
\begin{equation}\label{eq:pdip_meritFunction}
	\phi(y,s) = \sum_{i=1}^N\phi^\mathrm{v}_{i}\left (y_{i},s^\mathrm{P}_i\right ) + \sum_{l=1}^L \phi^\mathrm{L}_{l}\left (p,s^\mathrm{L}_{l}\right ) + \phi^\mathrm{C}\left (T,s^\mathrm{C}\right ),
\end{equation}
where
\begin{align*}
	\phi^\mathrm{v}_{i}\left (y_{i},s^\mathrm{P}_i\right ) & = J_{i}\left (w_i\right ) + \nu \left(\|g_{i}(w_i,T_i)\|_1 + \|h^\mathrm{P}_{i}(w_i)+s^\mathrm{P}_i \|_1\right) \\&\hspace{14em}- \tau \mathbf{1}^\top\log\left (s^\mathrm{P}_i\right ), \\
	\phi^\mathrm{L}_{l}\left (p,s^\mathrm{L}_{l}\right ) &= \nu\|h^\mathrm{L}_{l}\left (p^\mathrm{L}_l\right )+s^\mathrm{L}_l \|_1 - \tau\mathbf{1}^\top\log\left (s^\mathrm{L}_l\right ), \\
	\phi^\mathrm{C}\left (T,s^\mathrm{C}\right ) &= \nu\|h^\mathrm{C}(T)+s^\mathrm{C} \|_1 - \tau\mathbf{1}^\top\log\left (s^\mathrm{C}\right ), 
\end{align*}
with $\nu\geq\|(\lambda,\mu)\|_\infty$ and the logarithm taken elementwise.

We use the Armijo condition to accept a step $\alpha$ when
\begin{equation}\label{eq:pdip_armijo}
	\phi(y+\alpha\Delta y, s+\alpha\Delta s) \leq \phi(y,s) + \alpha \gamma \mathrm{D}\phi(y,s)[\Delta y,\Delta s],
\end{equation}
where $\gamma\in]0,0.5]$ is a parameter, and $\mathrm{D}\phi(y,s)[\Delta y,\Delta s]$ is the directional derivative of $\phi$ multiplied by step $(\Delta y,\Delta s)$~\cite{Nocedal2006_kappa}.

Evaluation of  $\phi$, $\mathrm{D}\phi$ can be separated between the vehicles, lanes and intersection.
We define $\Phi(\alpha):=\phi(y+\alpha\Delta y, s+\alpha\Delta s) $ ($\Phi^\mathrm{v}_i$, $\Phi^\mathrm{L}_l$, $\Phi^\mathrm{C}$ are defined equivalently) and summarize the procedure in Algorithm~\ref{alg:ap_stepSize}. 
\begin{algorithm}[t]
	\caption{Distributed selection of step-size $\alpha$, first level. }\label{alg:ap_stepSize}
	\begin{algorithmic}[1]
		\Procedure{StepSizeSelection}{$z$,$\Delta z$,$\tau$}
		\State $\forall \, i$: Compute and pass $\alpha^\mathrm{max}_{\mathrm{v}_i}$, $\Phi^\mathrm{v}_i(0)$, $\mathrm{D}\Phi^\mathrm{v}_i(0)$, $\Delta T_i$ to $C$. Pass $\Delta p_i$ to $l(i)$.  \label{list:pdip_alphamaxVehicle}
		\State $\forall \, l$: Compute and pass $\alpha^\mathrm{max}_{\mathrm{L}_l}$, $\Phi^\mathrm{L}_l(0)$, $\mathrm{D}\Phi^\mathrm{L}_l(0)$  to $C$ \label{list:pdip_alphamaxLane}
		\State $\mathrm{C}$: Compute $\alpha^\mathrm{max}_\mathrm{C}$ and $\alpha^\mathrm{max}$ with \eqref{eq:pdip_amax}, set $\alpha = \alpha^\mathrm{max}$ \hspace{-1em}\label{list:pdip_alphamaxAssembly}
		\State $\mathrm{C}$: Compute $\Phi^\mathrm{C}(0)$, $\mathrm{D}\Phi^\mathrm{C}(0)$; assemble $\Phi(0)$, $\mathrm{D}\Phi(0)$
		\Loop{}
		\State $\mathrm{C}$: Pass $\alpha$ to all $\mathrm{v}_i$, $\mathrm{L}_l$ \label{alg:pdip_stepSize:resetAlpha}
		\State $\forall \, i$: Compute and pass  $\Phi^\mathrm{v}_i(\alpha)$ to $\mathrm{C}$\label{alg:ap_StepSize:vehicleMeritFunction}
		\State $\forall \, l$: Compute and pass  $\Phi^\mathrm{L}_l(\alpha)$  to $\mathrm{C}$\label{alg:ap_StepSize:laneMeritFunction}
		\State $\mathrm{C}$: Compute $\Phi^\mathrm{C}(\alpha)$, assemble $\Phi(\alpha)$ with \eqref{eq:pdip_meritFunction}
		\If{$\Phi(\alpha)<\Phi(0) + \alpha \gamma \mathrm{D}\Phi(0)$} 
		\State \Return $\alpha$ and accept-notice to all $\mathrm{v}_i$, $\mathrm{L}_l$
		\Else
		\State $\alpha \gets \beta \alpha$
		\EndIf
		\EndLoop
		\EndProcedure
	\end{algorithmic}
\end{algorithm}

\subsection{Hessian Regularization}\label{sec:pdip_nonconvexity}
In order to ensure that $(\Delta y, \Delta s)$ is a descent direction for the merit function, i.e., that the step converges towards the solution, one must use a Lagrangian Hessian (approximation) whose reduced Hessian is positive-definite~\cite{Nocedal2006_kappa}. If one uses the exact Hessian, a regularization of the Lagrangian Hessian is necessary every time this condition is violated. 
The regularization procedure can be nontrivial, unless some simplification is introduced.

In this paper, we propose a simplification which allows us to perform the regularization in a fully decentralized fashion, based on the following observation. Assume first that no RECA nor SICA constraints are imposed: then every vehicle is independent and can regularize its Lagrangian Hessian independently. If RECA and SICA constraints are then introduced, some of the directions which were free in the search space are now constrained. Consequently, if the reduced Hessian without RECA and SICA constraints is positive definite, it must remain positive definite also after their introduction.

Note that, if $J$ is of least-squares type, the popular Gauss-Newton Hessian approximation can be employed, which is positive-(semi)definite by construction.

\begin{algorithm}[t]
	\caption{A Basic Distributed Primal-Dual Interior Point algorithm for the fixed order intersection problem. }\label{alg:ap_distributedIP}
	\begin{algorithmic}[1]
		\Procedure{FixedOrderPDIP}{$\tau^{[0]}$}
		\State $C:$ Initialize $z^\mathrm{C}$ and send  $\mu^\mathrm{C}$ to all $\mathrm{v}_i$ and $\mathrm{L}_l$ \label{alg:ap_distributedIP:centerIG}
		\State $\forall \, l$: Initialize $z^\mathrm{L}_l$ and send  $\mu^\mathrm{L}_l$ to $\mathrm{v}_i, \ i\in\mathcal{I}^\mathrm{v}_l$ \label{alg:ap_distributedIP:laneIG}
		\State $\forall \, i$: Initialize $z^\mathrm{v}_i$ and send $T^\mathrm{v}_i$ to $\mathrm{C}$, $p^\mathrm{v}_i$ to  $\mathrm{L}_l$
		\Loop
		\State $\mathrm{C}$: Send $\tau$ to all $\mathrm{v}_i$, $\mathrm{L}_l$.
		\State $\forall \, i$: Compute $M^\mathrm{v}_i,r^\mathrm{v}_i$ , if needed regularize Hessian
		\State $\Delta z \gets$\textsc{SearchDirection}($z$, $\tau$)
		\State $\alpha\gets$\textsc{StepSizeSelection}($z$, $\Delta z$, $\tau$)
		\State $\mathrm{C}$: Update $z^\mathrm{C} \gets z^\mathrm{C}+\alpha \Delta z^\mathrm{C}$ \label{alg:ap_distributedIP:centralUpdate}
		\State $\forall \, l$: Update $z^\mathrm{L}_l \gets z^\mathrm{L}_l+\alpha\Delta z^\mathrm{L}_l$
		\State $\forall \, i$: Update $z^\mathrm{v}_i \gets z^\mathrm{v}_i +\alpha\Delta z^\mathrm{v}_i$\label{alg:ap_distributedIP:vehicleUpdate}
		\If{\textsc{Terminate}($r$, $\tau$)} \label{alg:ap_distributedIP:termination}
		\State \Return Solution found
		\Else 
		\State $\tau\gets$ \textsc{UpdateBarrierParameter}($r$)\label{alg:ap_distributedIP:update}
		\EndIf
		\EndLoop
		\EndProcedure
	\end{algorithmic}
\end{algorithm}
\subsection{A Practical Algorithm}
\label{sec:ap_algorithm}

A distributed PDIP algorithm relying on Algorithms~\ref{alg:ap_distributionLevel2} and~\ref{alg:ap_stepSize}  is summarized in Algorithm~\ref{alg:ap_distributedIP}.
Note that this algorithm gives exactly the same iterates and has the same convergence properties as a fully centralized PDIP algorithm.

\subsubsection{Termination Criterion}
The algorithm is terminated on line~\ref{alg:ap_distributedIP:termination} when
\begin{equation}\label{eq:pdip_termination}
	\left \|r_{\tau^{[k]}}\left (z^{[k+1]}\right )\right \|_\infty < \varepsilon  \quad\text{and}\quad \tau^{[k]} < \varepsilon,
\end{equation}
for some tolerance $\varepsilon$.
While termination must be decided centrally, we can exploit the fact that
\begin{align*}
	\|r\|_\infty &= \max\left (\left \|r^\mathrm{v}\right \|_\infty,\left \|r^\mathrm{L}\right \|_\infty,\left \|r^\mathrm{C}\right \|_\infty\right ),\\
	\left \|r^\mathrm{v}\right \|_\infty &= \max\left (\left \|r^\mathrm{v}_1\right \|_\infty, \ldots,\|r^\mathrm{v}_N\|_\infty\right ), \\
	\left \|r^\mathrm{L}\right \|_\infty &= \max\left (\left \|r^\mathrm{L}_1\right \|_\infty, \ldots,\left \|r^\mathrm{L}_L\right \|_\infty\right ).
\end{align*}

\subsubsection{Barrier Parameter Update}
While any update scheme can be used at line~\ref{alg:ap_distributedIP:update}, in this paper we use the simple Fiacco-McCormick rule~\cite{Nocedal2006_kappa}: $\tau^{[k+1]}\gets\eta\tau^{[k]}$, with $\eta\in]0,1[$, when $\left \|r_{\tau^{[k]}}\left (z^{[k+1]}\right )\right \|_\infty < \tau^{[k]}$.

\subsection{Example}
\label{sec:ap_example}
As an example, we consider a scenario with three vehicles on each lane. 
Because we solve the problem to full convergence, the solution is independent of the used algorithm. Consequently, we do not investigate here the robustness of our approach with respect to packet losses, state estimation errors and unmodeled dynamics which we discussed in~\cite{Hult2018c_kappa}, where we have also shown that linear dynamics can be sufficient to perform trajectory tracking, though nonlinear dynamics have been used in~\cite{Hult2018a_kappa} to optimize fuel consumption. While the difficulty of any OCP depends on many aspects, including the initial guess and nonlinearity of the dynamical model, a discussion on which model is best is beyond the scope of this paper.
Assuming that all vehicles are electric, their motion can be described by
\begin{subequations}\label{eq:pdip_dynamicEquations}
	\begin{align}
	\dot{p}_i(t)&=v_i(t), \label{eq:pdip_dynamicEquations1}\\
	\dot{v}_i(t)&= \frac{1}{m_i}\left (c^E E_i(t) - F^\mathrm{B}_{i} - c^\mathrm{d}\mathrm{v}_i(t)^2-c^\mathrm{r}\right ), \label{eq:pdip_dynamicEquations2}\\
	E(t) & \leq \min\left (E_i^\mathrm{max},P_i^\mathrm{max}/\omega_i(t)\right ) \\
	0&\leq\omega_i(t) \leq \omega_i^\mathrm{max},
	\end{align}
\end{subequations}
where $E_i(t)$ is the motor torque, $F^\mathrm{B}_i(t)$ the friction brake force, $\omega_i(t) = c^\omega v_i(t)$ the motor speed and $x_i(t)=(p_i(t),v_i(t))$, $u_i(t) = (E_i(t), F^\mathrm{B}_{i}(t))$. 
The parameters $c^{E},c^{\omega},c^\mathrm{d}, c^\mathrm{rr}, \omega_i^\mathrm{max}, \mathrm{E}_i^\mathrm{max}$ and $P_i^\mathrm{max}$ are selected as in \cite{Hult2018a_kappa}.
We use $K=100$ and an explicit Runge-Kutta integrator of order $4$ with $\Delta t = 0.2$.
The objective function is 
\begin{multline}
J_{v_i}(y_{v_i})= Q_i^\mathrm{f}(v_{i,K}-v^\mathrm{r})^2 + \\\sum_{k=0}^{K}Q_i(v_{i,k}-v^\mathrm{r})^2+(u_{i,k}-u_i^\mathrm{r})^\top R_i(u_{i,k}-u_i^\mathrm{r}),
\end{multline}
where $v^\mathrm{r}$ is the reference speed, and $u_i^\mathrm{r}$ is an input which maintains the reference speed $v^\mathrm{r}$.
The cost weights are $Q_i = 1/(v_i^\mathrm{r})^2$, $R_i = \mathrm{diag}((1/E_{i}^{\mathrm{max}})^2, 1/F^\mathrm{B,max}_{i})^2)$, with $Q_i^\mathrm{f}$ the cost-to-go associated with the Linear-Quadratic Regulator (LQR) computed with $Q_i,R_i$ and the linearization of \eqref{eq:pdip_dynamicEquations2} around $v_i^\mathrm{r}$.

The vehicles' initial states are selected randomly between $80 \ \mathrm{m}$ and $120 \ \mathrm{m}$ before the intersection, with $v_{i,0}=v_i^\mathrm{r}= 70 \ \mathrm{km/h}$.
The initial solution candidate $w_i^{[0]},T_i^{[0]}$ has all vehicles driving at $v^\mathrm{r}$ at all times $k = 0, \hdots,K$, and Lagrange multipliers and slacks $\lambda^{[0]} =0$, $\mu^{[0]}=s^{[0]}=\mathbf{1}$, with $\tau^{[0]}=1$.
\iffigs
\begin{figure*}[t]
	\begin{subfigure}[t]{0.32\linewidth}
		\centering
%		\psfrag{x}[Bc][Bc][0.8][0]{Iteration $k$}
%		\psfrag{tauplach}[Bl][Bl][0.55][0]{$\tau^{[\kiter]}$}
%		\psfrag{normrplachaaaa}[Bl][Bl][0.55][0]{$\|r(z^{[\kiter]},\tau^{[\kiter]})\|_\infty$}
%		\psfrag{y}[Bc][Bc][0.8][0]{KKT violation, barrier parameter}
%		\psfragfig[width=0.95\linewidth]{results/pdip_12vehConvergence.eps}
		\includegraphics[width=0.98\linewidth,trim= 0 0 0 0]{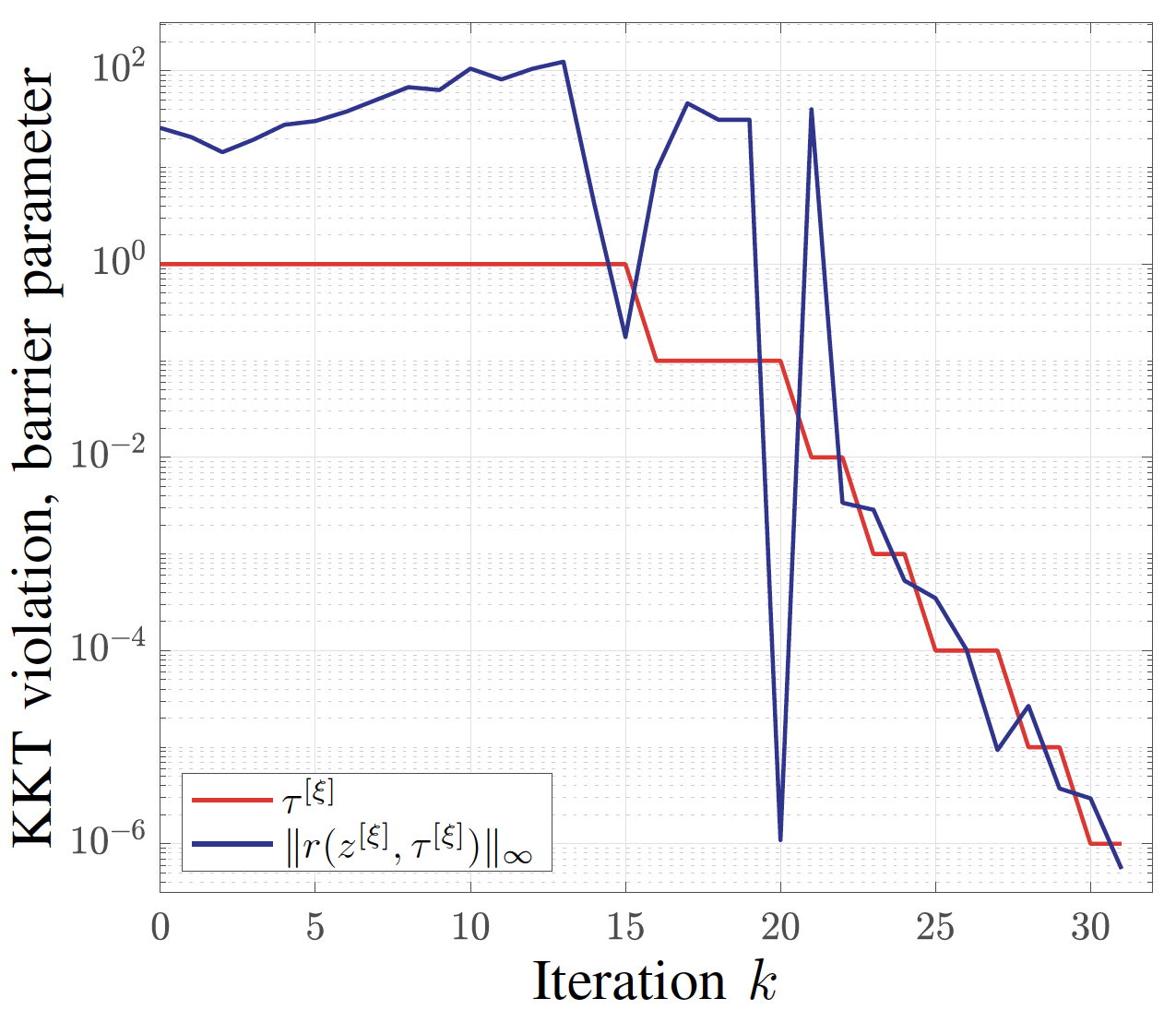}
		\caption{Progress Metrics, tolerance in dashed line.}\label{fig:pdip_exampleProgress}
	\end{subfigure}%\textit{}
	\hspace{0.5em}
	\begin{subfigure}[t]{0.32\linewidth}
		\centering
%		\psfrag{x}[Bc][Bc][0.8][0]{Iteration $\kiter$}
%		\psfrag{alpha11}[Bl][Bl][0.55][0]{$\alpha^{\mathrm{max},[\kiter]}$}
%		\psfrag{alpha2}[Bl][Bl][0.55][0]{$\alpha^{[\kiter]}$}
%		\psfrag{y}[Bc][Bc][0.8][0]{Step size}
%		\psfragfig[width=0.95\linewidth]{results/pdip_12vehStepSize.eps}
		\includegraphics[width=0.96\linewidth,trim= 0 20 0 0]{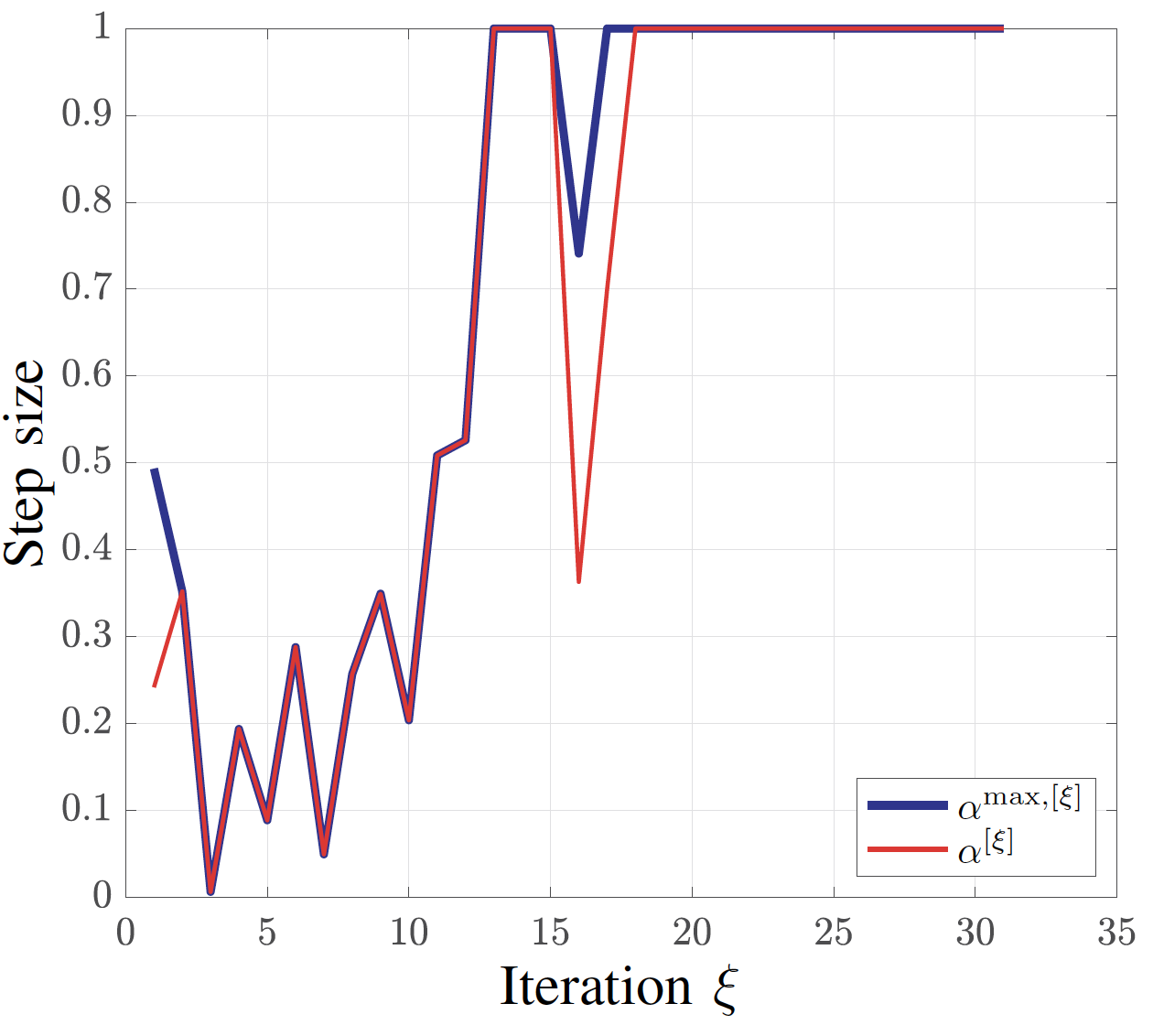}
		\caption{Maximum step size satisfying the positivity constraints~\eqref{eq:pdip_fractionOfTheBoundary} and selected step size.}\label{fig:pdip_exampleStepSize}
	\end{subfigure}%\textit{}
	\hspace{0.5em}
	\begin{subfigure}[t]{0.32\linewidth}
		\centering
%		\psfrag{x}[Bc][Bc][0.8][0]{$t$ s}
%		\psfrag{y}[Bc][Bc][0.8][0]{$v_{i,k}$ km/h}
%		\psfragfig[width=0.95\linewidth]{results/pdip_12vehVelocity2.eps}
		\includegraphics[width=0.95\linewidth]{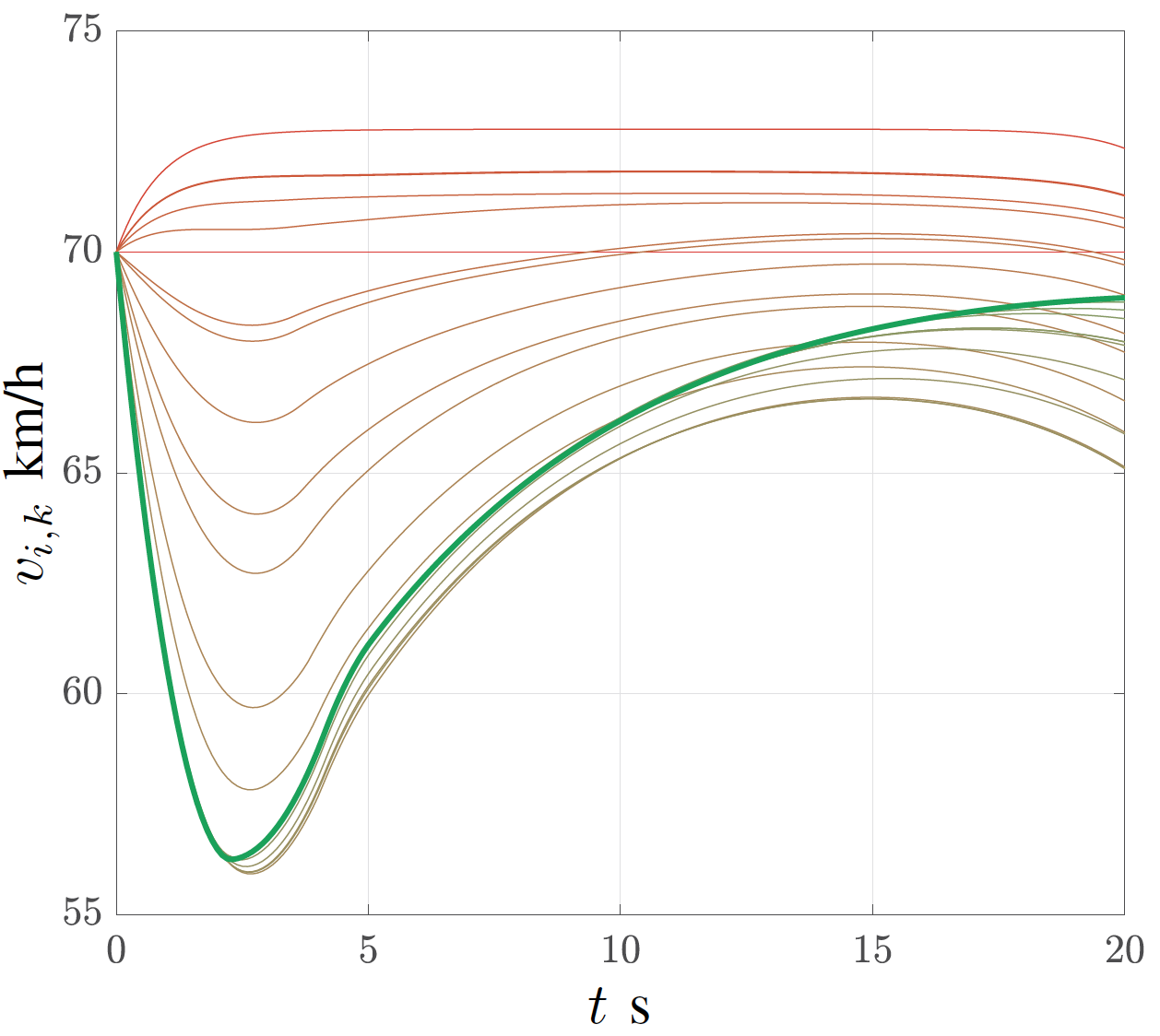}
		\caption{The velocity component of the solution for one of the 12 vehicles over the iterates $\kiter$.}\label{fig:pdip_exampleVelocity}
	\end{subfigure}
	\caption{Data from a 12-Vehicle Example. In (a) the increases in $\|r(z)\|$ follow decreases in $\tau$. In (c) the initial guess is displayed in red, the optimal solution in thick green, and the intermediate iterates in intermediate hues.}\label{fig:pdip_example}
\end{figure*}
\fi

The evolution over the iterates of  $\left \|r(z)^{[\kiter]}\right \|_\infty$ 
 and $\tau^{[\kiter]}$ is shown in \fig{\ref{fig:pdip_exampleProgress}}, and the step size is shown in \fig{\ref{fig:pdip_exampleStepSize}}.

As an illustration of the algorithm's progression in the primal variables, \fig{\ref{fig:pdip_exampleVelocity}} shows the velocity profile of one of the 12 vehicles, at each iterate. We observe that the final 15 iterates all are similar enough to the solution to be considered identical in a practical context.

For illustration,  the sparsity-pattern of  $M$ is given in \fig{\ref{fig:pdip_exampleKKTMatrix}}. The size of $M$ is $25832\times 25832$, where  $M^\mathrm{v}$ is $24176 \times 24176$,  $M^\mathrm{L}_{l}$ is $404 \times 404$ and  $M^\mathrm{C}$ is $40 \times 40$. 
Besides evaluating the involved functions and their derivatives, the main computational effort is therefore  the factorization of the vehicle blocks $M^\mathrm{v}_i$, each roughly sized $2010\times2010$, where the variation in size depends on the fact that the vector of times $T_i$ is different for each vehicle $i$.
Note that the factors for all $M^\mathrm{v}_i$ can be computed in parallel between the vehicles (Line~\ref{alg:ap_distributionLevel2:vehicleComp1} in Algorithm~\ref{alg:ap_distributionLevel2}), and $\bar M^\mathrm{L}_l$ can be factorized in parallel between the lanes (Line~\ref{alg:ap_distributionLevel2:laneComp1}),
therefore greatly reducing the computation time through parallelization.

However, the computational time does not necessarily dominate the time it takes to perform one iterate.
In~\cite{Hult2018c_kappa} the time required to communicate between the vehicles, lane-centers and intersection-center was observed to be orders of magnitude larger. While that was partially implementation-dependent, there are some intrinsic limitations (e.g., packet losses, communication being serial or only partially parallel) which suggest that communication would be the bottleneck in a practical context.
In the next section, we analyze the communication requirements, and discuss some modifications to the scheme which decrease both the number of transmissions and the amount of data communicated.
\iffigs
\begin{figure}[t]
	\centering
	\includegraphics[width=0.67\linewidth]{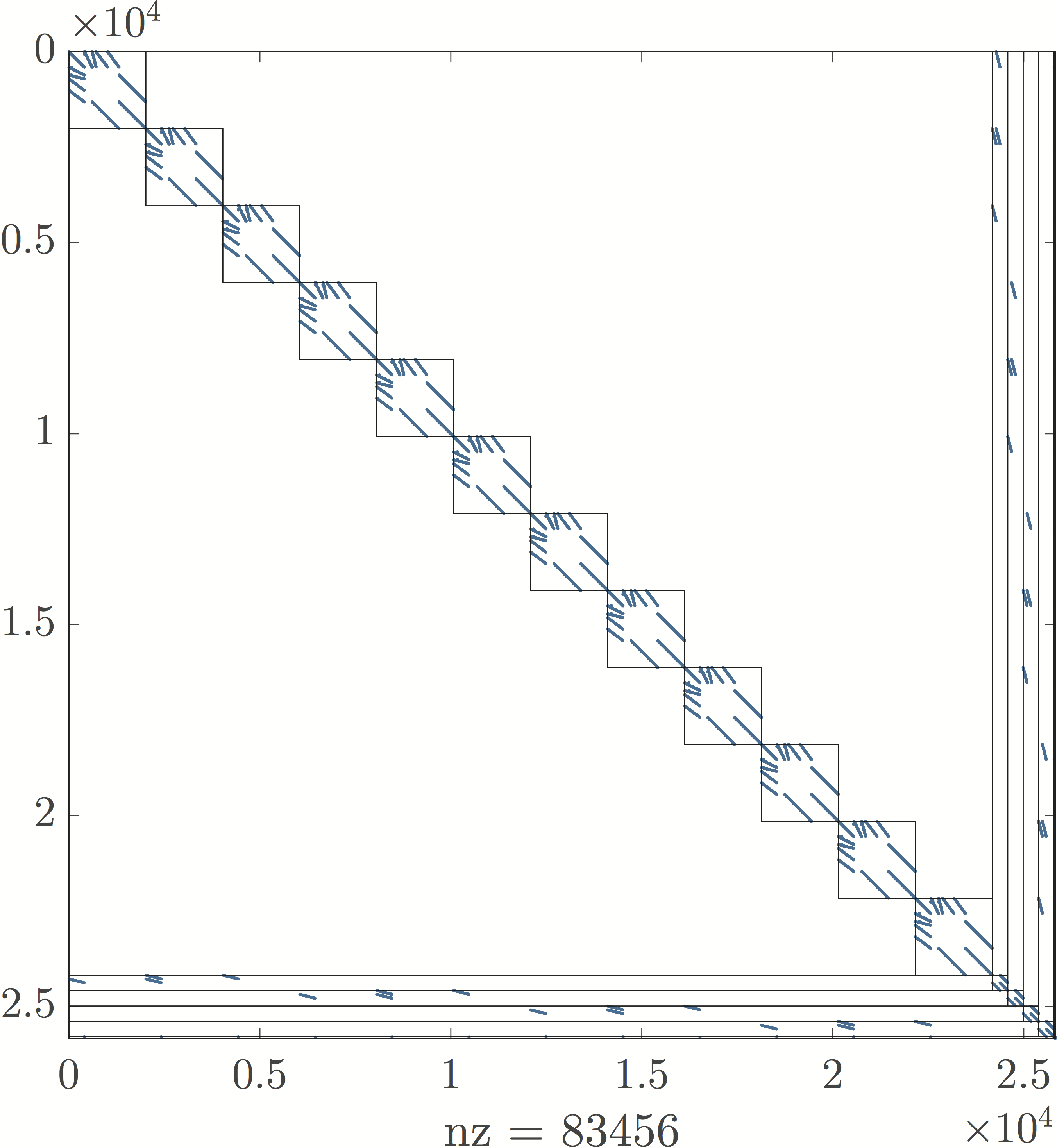}
	\caption[Sparsity pattern of KKT-matrix]{KKT-matrix $M(z)$ from a 12-vehicle scenario. The large upper left hand block is $M^\mathrm{v}$, consisting of the sub-blocks $M^\mathrm{v}_{i}$, $i\in \mathcal{N}$. The smaller blocks in the lower right corner are $M^\mathrm{L}_{l}$, while $M^\mathrm{C}$ is so small that it is essentially not visible. The lines demarcate the sections of $M^\mathrm{vL}$ and $M^\mathrm{vC}$ associated with the RECA constraints on each lane and, essentially not visible, the SICA constraints }\label{fig:pdip_exampleKKTMatrix}
\end{figure}
\fi

\section{Communication Requirements}\label{sec:ap_communication}
In this section, we discuss the communication requirements of Algorithm~\ref{alg:ap_distributedIP}.
We first analyze the data flow between the vehicles, lane centers and intersection center in Section~\ref{sec:pdip_communicationAnalysis}, and discuss how the data exchange required by the proposed algorithm can be reduced in Sections~\ref{sec:pdip_communicationDataReduction} and \ref{sec:pdip_communicationRoundReduction}. 

\subsection{Analysis of Communication Requirements}\label{sec:pdip_communicationAnalysis}

\begin{figure*}[t]
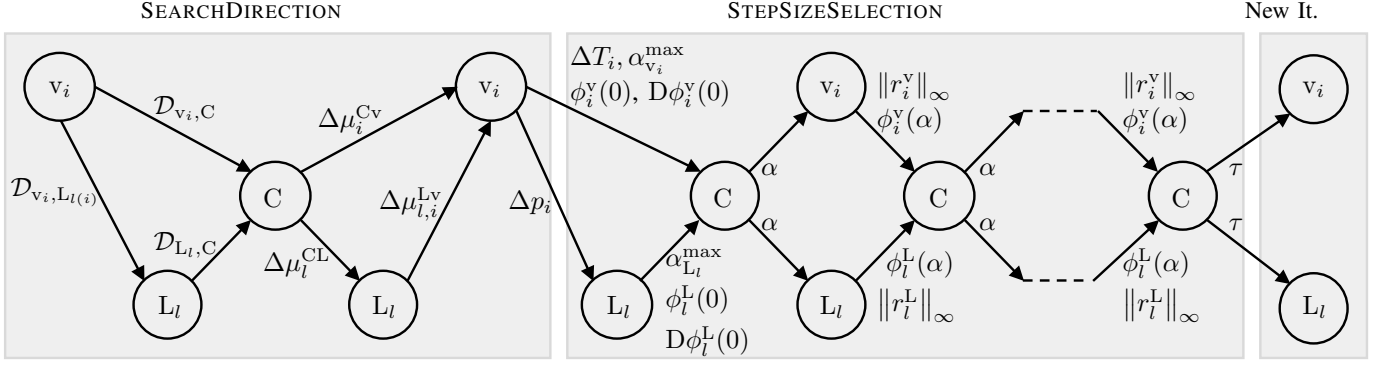

	\centering
	\begin{overpic}[width=1\linewidth]{illustrations/comFlowTest.pdf}
		\put(3.3,19.5){$\mathrm{v}_i$}
		\put(34.8,19.5){$\mathrm{v}_i$}
		\put(59.8,19.5){$\mathrm{v}_i$}
		\put(95.2,19.5){$\mathrm{v}_i$}
		\put(11.3,3.3){$\mathrm{L}_l$}
		\put(27,3.3){$\mathrm{L}_l$}
		\put(44.2,3.3){$\mathrm{L}_l$}
		\put(59.8,3.3){$\mathrm{L}_l$}
		\put(95.2,3.3){$\mathrm{L}_l$}
		\put(19,11.2){$\mathrm{C}$}
		\put(52,11.2){$\mathrm{C}$}
		\put(67.8,11.2){$\mathrm{C}$}
		\put(85.6,11.2){$\mathrm{C}$}
		\put(11,18){{$\mathcal{D}_{\mathrm{v}_i, \mathrm{C}}$}}
		\put(0.5,12){{$\mathcal{D}_{\mathrm{v}_i, \mathrm{L}_{l(i)}}$}}
		\put(11,8){{$\mathcal{D}_{\mathrm{L}_l,\mathrm{C}}$}}
		\put(23,17){$\Delta \mu^\mathrm{Cv}_i$}
		\put(19,6.5){$\Delta \mu^\mathrm{CL}_l$}
		\put(27.5,11){$\Delta \mu^\mathrm{Lv}_{l,i}$}
		\put(37,11){$\Delta p_i$}
		\put(41.5,21.6){$\Delta T_i, \alpha^\mathrm{max}_{\mathrm{v}_i}$}
		\put(41.5,19){$\phi^\mathrm{v}_i(0)$, $\mathrm{D}\phi^\mathrm{v}_i(0)$}
		\put(48.5,6.8){$\alpha^\mathrm{max}_{\mathrm{L}_l}$}
		\put(48.5,3.8){$\phi^\mathrm{L}_l(0)$}
		\put(48.5,0.8){$\mathrm{D}\phi^\mathrm{L}_l(0)$}
		\put(55.5,9.5){$\alpha$}
		\put(55.5,13.5){$\alpha$}
		\put(64,6.5){ $\phi^\mathrm{L}_l(\alpha)$}
		\put(64,17){$\phi^\mathrm{v}_{i}(\alpha)$}
		\put(64,19.6){$\left \|r^\mathrm{v}_i\right \|_\infty$}
		\put(64,3.5){$\left \|r^\mathrm{L}_l\right \|_\infty$}
		\put(71.5,9.5){$\alpha$}
		\put(71.5,13.5){$\alpha$}
		\put(82,6.5){$\phi^\mathrm{L}_l(\alpha)$}
		\put(82,17){$\phi^\mathrm{v}_{i}(\alpha)$}
		\put(82,19.6){$\left \|r^\mathrm{v}_i\right \|_\infty$}
		\put(82,3.5){$\left \|r^\mathrm{L}_l\right \|_\infty$}
		\put(89.8,9.5){$\tau$}
		\put(89.8,13.5){$\tau$}
		\put(10,25){\small \textsc{SearchDirection}}
		\put(53,25){\small \textsc{StepSizeSelection}}
		\put(91,25){\small New It.}
	\end{overpic}
	\caption{Illustration of the communication flow in the problem. The horizontal direction indicates the order in which the communication is done whereas the vertical differentiates the vehicle, lane and intersection levels. 
		With $\|r^\mathrm{v}_i\|_\infty$, $\|r^\mathrm{L}_l\|_\infty$ we denote the residual norms obtained with step size $\alpha$.
	}\label{fig:pdip_comFlow}
\end{figure*}

\begin{figure*}
\begin{center}
	\footnotesize
			\begin{tabular}{lclcc}
				\hline
				Link & \multicolumn{2}{l}{\hspace{0.3em}Location} &Data per Communication Round & \# Floats		\\
				\hline
				\vspace{-0.8em}
				\\
	%			\vspace{-0.5em}
				& &&\textsc{SearchDirection} \\
				\hline
				\vspace{-0.5em}
				\\
				$\mathrm{v}_i$ to $\mathrm{L}_l$&A.\ref{alg:ap_distributionLevel2} & L.\ref{alg:ap_distributionLevel2:vehicleComp1} & 
				\hspace{1em}
				$\underbrace{M^\mathrm{Lv}_{l,i} \left (M^\mathrm{v}_i\right )^{-1} M^\mathrm{vL}_{i,l}}_{(K+1)\times (K+1),  \ \text{symmetric}}$, 
				$\underbrace{M^\mathrm{Cv} \left (M^\mathrm{v}_{i}\right )^{-1} M^\mathrm{vL}_{i,l} }_{n_{T_{i}}\times (K+1)}$, 
				$\underbrace{M^\mathrm{Lv}_{l,i} \left (M^\mathrm{v}_{i}\right )^{-1}r^\mathrm{v}_{i}}_{K+1}$, 
				$\underbrace{p^\mathrm{v}_{i}}_{K+1}$ 
				\hspace{1em}
				&$\frac{1}{2}K^2+\left(n_{T_{i}}+\frac{7}{2}\right)K+n_{T_{i}}+3$\\[2.5em]
				$\mathrm{v}_i$ to $\mathrm{C}$&A.\ref{alg:ap_distributionLevel2}& L.\ref{alg:ap_distributionLevel2:vehicleComp1} &
				$\underbrace{M^\mathrm{Cv} \left (M^\mathrm{v}_{i}\right )^{-1} M^\mathrm{vC}}_{n_{T_{i}}\times n_{T_{i}}, \ \text{symmetric}}$, 
				$\underbrace{M^\mathrm{Cv}  \left (M^\mathrm{v}_i\right )^{-1}r^\mathrm{v}_i}_{n_{T_i} }$, 
				$\underbrace{T_i}_{n_{T_i}}$
				& $\frac{1}{2}n_{T_i}^2 + \frac{5}{2}n_{T_i}$\\[2.5em]
				$\mathrm{L}_l$ to $\mathrm{C}$&A.\ref{alg:ap_distributionLevel2}& L.\ref{alg:ap_distributionLevel2:laneComp1}  &
				$\underbrace{\bar{M}^\mathrm{LC}_l \left ( \bar{M}^\mathrm{L}\right )^{-1} \bar{M}^\mathrm{CL}_l}_{n_{T_l^\mathrm{L}}\times n_{T_l^\mathrm{L}}, \ \text{symmetric}}$, $\underbrace{\bar{M}^\mathrm{LC}_l \left ( \bar{M}^\mathrm{L}\right )^{-1}\bar{r}^\mathrm{L}_l}_{n_{T_l^\mathrm{L}}}$ &$\frac{1}{2}n_{T_l^\mathrm{L}}^2+\frac{3}{2}n_{T_l^\mathrm{L}} $\\[0.5em]
				$\mathrm{C}$ to $\mathrm{L}_l$&A.\ref{alg:ap_distributionLevel2}& L.\ref{alg:ap_distributionLevel2:centralComp} & $\Delta \mu^\mathrm{CL}_l$& $n_{T_l^\mathrm{L}}$\\[0.5em]
				$\mathrm{C}$ to $\mathrm{v}_i$&A.\ref{alg:ap_distributionLevel2}& L.\ref{alg:ap_distributionLevel2:centralComp} &$\Delta \mu^\mathrm{Cv}_{i}$& $n_{T_i}$\\[0.5em]
				$\mathrm{L}_l$ to $\mathrm{v}_i$&A.\ref{alg:ap_distributionLevel2}& L.\ref{alg:ap_distributionLevel2:laneComp2}&$\Delta \mu^\mathrm{Lv}_{l,i}$& $2K$ \\[1em]
				& &&\textsc{StepSizeSelection} \\
				\hline
				\vspace{-0.5em}
				\\
				$\mathrm{v}_i$ to $\mathrm{L}_l$ &A.\ref{alg:ap_stepSize}& L.\ref{list:pdip_alphamaxVehicle} &$\Delta p_{i}$& $K+1$\\[0.5em]
				$\mathrm{v}_i$ to $\mathrm{C}$ &A.\ref{alg:ap_stepSize}& L.\ref{list:pdip_alphamaxVehicle}  &$\alpha_{\mathrm{v}_i}^{\mathrm{max}}$, $\Phi^\mathrm{v}_{i}(0)$, $\mathrm{D}\Phi^\mathrm{v}_{i}(0)$, $\Delta T_{i}$ & $3+n_{T_{i}}$ \\[0.5em]
				$\mathrm{L}_l$ to $\mathrm{C}$ &A.\ref{alg:ap_stepSize}& L.\ref{list:pdip_alphamaxLane} &$\alpha_{\mathrm{L}_l}^{\mathrm{max}}$, $\Phi^\mathrm{L}_l(0)$, $\mathrm{D}\Phi^\mathrm{L}_{l}(0)$& $3$ \\[0.5em]
				$\mathrm{C}$ to $\mathrm{v}_i,\,\mathrm{L}_l$ &A.\ref{alg:ap_stepSize}& L.\ref{alg:pdip_stepSize:resetAlpha}&$\alpha $& $1$ \\[0.5em]
				$\mathrm{v}_i$ to $\mathrm{C}$ &A.\ref{alg:ap_stepSize}& L.\ref{alg:ap_StepSize:vehicleMeritFunction} &$\Phi^\mathrm{v}_{i}(\alpha )$& $1$\\[0.5em]
				$\mathrm{L}_l$ to $\mathrm{C}$ &A.\ref{alg:ap_stepSize} & L.\ref{alg:ap_StepSize:laneMeritFunction} &$\Phi^\mathrm{L}_{l}(\alpha )$& $1$\\[0.5em]
				\hline
		\end{tabular}
\captionof{table}{Summary of the communication between the vehicles $\mathrm{v}_i$,  lane centers $\mathrm{L}_l$ and  intersection center $\mathrm{C}$. The second column refers to the Algorithm (A) and Line (L) where the communication occurs. The numbers under braces denote the amount of non-zeros of the object. 
}\label{tab:pdip_communicationSummary}
\end{center}
\end{figure*}
Most data is exchanged during the solution of the KKT-system in Algorithm~\ref{alg:ap_distributionLevel2} and the selection of the step-size in Algorithm~\ref{alg:ap_stepSize}.
Descriptions of the data involved as well as the number of floats communicated are summarized in Table~\ref{tab:pdip_communicationSummary}.

Most often $K \gg n_{T_i}$, whereby most communication occurs during Line~\ref{alg:ap_distributionLevel2:vehicleComp1} of Algorithm~\ref{alg:ap_distributionLevel2} when the vehicle sends~$\mathcal{D}_{\mathrm{v}_i, \mathrm{L}_{l(i)}}$.
Besides the communication between the lane-centers and the intersection-center and an initial round of communication where the initial guess is sent to the vehicles (lines~\ref{alg:ap_distributedIP:centerIG}-\ref{alg:ap_distributedIP:laneIG}), the communication required for the remaining parts (i.e., the indication of a new iteration, the current barrier parameter value, termination of line-search or algorithm completion)  consists of  single floats and logicals.
As illustrated in \fig{\ref{fig:pdip_comFlow}}, these can be sent together with the search direction and step size results. 

\subsubsection*{Communication in the Example}
In the example  $K=100$, whereby each vehicle sends more than 5000 floats \emph{per iterate} (more than 320000 bits) to their respective lane-center.
Even if all vehicles communicate in parallel, the physical transmission will take at least $58.7$ ms using the 802.11p protocol, i.e., the current standard for vehicular communications. The time per bit is computed assuming double precision and using~\cite{Fernandez2012_kappa} %the formula 
\begin{align}
	\label{eq:comm_time}
	50+8\, \mathrm{ceil}((n_{\text{data bits}}+22)/48) \ \mu \mathrm{s}.
\end{align}
During $33$ iterations, at least $1.94$ s would be spent communicating to construct $\Delta z$, which would be too high in a practical setting.
Next, we discuss how the data exchange can be reduced.

\subsection{Reduction of Data exchange per iterate}\label{sec:pdip_communicationDataReduction}
Most of the communication is due to the dependence on $K^2$ in the number of communicated floats on Line~\ref{alg:ap_distributionLevel2:vehicleComp1} of Algorithm~\ref{alg:ap_distributionLevel2} and corresponds to the enforcement of RECA. 
Unfortunately, reducing the horizon length $K$ to contain this issue would result in a significant performance loss.
As an alternative to tackle this problem, we propose to replace all RECA constraints \eqref{eq:discreteRECA}
with the approximation
\begin{subequations}\label{eq:pdip_parametrizedRECA}
	\begin{align}
	p_{i,k} + \delta_{i}/2  &\leq \rho_{i,k}(\theta_{i}),&&k =1, \hdots, K\label{eq:pdip_parametrizedRECA1}\\
	\rho_{i,k}(\theta_{i}) +\delta_{i}/2&\leq  p_{i+1,k},&&k =1, \hdots, K,\label{eq:pdip_parametrizedRECA2}
	\end{align}
\end{subequations}
where $\rho_{i,k}(\theta_{i})$ is a function of $k$, and \emph{coupling parameters} $\theta_{i} \in \mathbb{R}^{q}$, introduced as additional decision variables in the fixed order problem \eqref{eq:fixedOrderProblem}.
Equation~\eqref{eq:pdip_parametrizedRECA} enforces RECA constraints indirectly, by requiring that function $\rho_{i,k}(\theta_{i})$ be between $p_{i,k}$ and $p_{i+1,k}$ at all $k$. 
In this case, the amount of data to be exchanged scales with $q^2$ rather than on $K^2$, therefore allowing one to significantly reduce the amount of data to be exchanged.
The parameterized RECA coupling can be included in the distributed scheme in two different ways, as we detail next.

\begin{figure}[t]
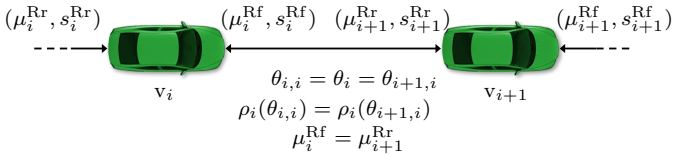

	\begin{center}
		\begin{overpic}[width=0.9\linewidth]{illustrations/recaVarIllustration.pdf}
			\put(39.5,12){\footnotesize $\theta_{i,i} = \theta_{i} = \theta_{i+1,i}$}
			\put(34,7){\footnotesize $\rho_{i}(\theta_{i,i}) = \rho_{i}(\theta_{i+1,i})$}
			\put(43,2){\footnotesize $\mu^\mathrm{Rf}_{i} = \mu^\mathrm{Rr}_{i+1}$}
			\put(20,10){\footnotesize $\mathrm{v}_i$}
			\put(75,10){\footnotesize $\mathrm{v}_{i+1}$}
			\put(31,22){\footnotesize $(\mu^\mathrm{Rf}_{i},s^\mathrm{Rf}_{i}) $}
			\put(-5,22){\footnotesize $(\mu^\mathrm{Rr}_{i},s^\mathrm{Rr}_{i}) $}
			\put(50,22){\footnotesize $(\mu^\mathrm{Rr}_{i+1},s^\mathrm{Rr}_{i+1}) $}
			\put(87,22){\footnotesize $(\mu^\mathrm{Rf}_{i+1},s^\mathrm{Rf}_{i+1}) $}
		\end{overpic}
	\end{center}
	\caption{Illustration of the relationship between the variables introduced in the primal and dual communication reduction approaches. }\label{fig:pdip_recaVarIllustration}
\end{figure}

\subsubsection*{The ``Primal'' Approach}
The first alternative is to handle the coupling parameters $\theta_{i}$ at the lane-centers, and constraints \eqref{eq:pdip_parametrizedRECA1}, \eqref{eq:pdip_parametrizedRECA2}  on-board the vehicles, i.e., the problem to be solved becomes
	\begin{align*}
	\underset{y}{\min} & \quad \sum_{i=1}^{N}J_{i}(w_{i}) &&& \\
	\mathrm{s.t.} 
	& \quad  g_{i}(w_i,T_i) = 0, &i &\in \mathcal{I}^\mathrm{v}, &&|\, \lambda_i,  \\
	& \quad  h_{i}^\mathrm{PR}(w_i,\theta_{i-1},\theta_i) \leq 0, & i &\in \mathcal{I}^\mathrm{v}, &&|\, \mu_i^\mathrm{PR}, \\
	& \quad h^\mathrm{C}(T) \leq 0, &&&&|\, \mu^\mathrm{C},  
	\end{align*}
with  
\begin{align*}
h_{i}^\mathrm{PR} (w_i,\theta_{i-1},\theta_{i}) = 
\begin{bmatrix}
c_i(x_{i,k},u_{i,k}) \\ 
p_{i,k} + \delta_i/2 - \rho_{i,k}(\theta_i) \\
\rho_{i-1,k}(\theta_{i-1}) + \delta_{i-1}/2 - p_{i,k}  \\
\end{bmatrix}.
\end{align*}
The vehicle variables $z^\mathrm{v}_{i}$ now include additional multipliers and the corresponding slack variables, as $\mu^\mathrm{PR}_{i,k}=\left (\mu^\mathrm{P}_{i,k},\mu^\mathrm{Rf}_{i,k},\mu^\mathrm{Rr}_{i,k}\right )$.

The lane-center variables become $z^\mathrm{L}_{l}=(\theta_{1},\ldots,\theta_{N^\mathrm{v}_l})$, and
\begin{align*}
	r^\mathrm{L}_{l}=\nabla_{\theta_{i}}\mathcal{L}= 
	\left (\nabla_{\theta_{i}}\rho_{i}\mu^\mathrm{Rr}_{i+1}-\nabla_{\theta_{i}}\rho_{i}\mu^\mathrm{Rf}_{i}, \ i\in \mathcal{I}^\mathrm{v}_l \right ),
\end{align*}
where $\rho_i = (\rho_{i,0},\ldots,\rho_{i,K})$, and $\mu^\mathrm{R,\cdot}_{i}=\big (\mu^\mathrm{R,\cdot}_{i,0},\ldots,\mu^\mathrm{R,\cdot}_{i,K+1} \big)$.
Then, the corresponding KKT matrix components become
\begin{align*}
	M^\mathrm{L}_{l,i} &= M^\mathrm{Lr}_{l,i} + M^\mathrm{Lf}_{l,i}\\
	M^\mathrm{Lr}_{l,i} &= \sum_{i\in\mathcal{I}^\mathrm{v}_l} \left \langle \nabla^2_{\theta_{i}} \rho_{i},\mu^\mathrm{Rr}_{i+1} \right \rangle, \quad M^\mathrm{L,f}_{l,i} = -\sum_{i\in\mathcal{I}^\mathrm{v}_l} \left \langle \nabla^2_{\theta_{i}} \rho_{i},\mu^\mathrm{Rf}_{i} \right \rangle, \\
	M^\mathrm{Lv}_{l,i} &= \nabla_{\theta_{i-1}}\rho_{i-1}\nabla_{\mu^\mathrm{Rr}_{i}} z^\mathrm{v}_{i}  - \nabla_{\theta_{i}}\rho_{i}\nabla_{\mu^\mathrm{Rf}_{i}}  z^\mathrm{v}_{i}.
\end{align*}
In this case, if $M^\mathrm{L} \nsucc 0$, one might need to introduce additional Hessian regularization at the lane center level (c.f. the discussion in Section~\ref{sec:pdip_nonconvexity}). This operation can be done by each lane center independently of the other lane centers and the single vehicles.

The size of the information assembled by each vehicle and sent to the lane center and central node is 
	\begin{align*}
	&M^\mathrm{Lv}_{l,i} \left (M^\mathrm{v}_{i}\right )^{-1} M^\mathrm{vL}_{i,l},&& \text{size: } q \times q \text{ twice},\\
	&M^\mathrm{Lv}_{l,i} \left (M^\mathrm{v}_{i}\right )^{-1} r^\mathrm{v}_{i},&& \text{size: } 2q ,\\
	&M^\mathrm{Lv}_{l,i} \left (M^\mathrm{v}_{i}\right )^{-1} M^\mathrm{vC}_{i},&& \text{size: } 2q \times n_{T_i},\\
	&\nabla_{\theta_{i-1}}\rho_{i-1}\mu^\mathrm{R,r}_{i}-\nabla_{\theta_{i}}\rho_{i}\mu^\mathrm{R,f}_{i},&& \text{size: } 2q .
	\end{align*}
This amounts to $q^2 + (5+2n_{T_i})q + 2$ floats on  Line~\ref{alg:ap_distributionLevel2:vehicleComp1} of Algorithm~\ref{alg:ap_distributionLevel2}. Moreover, the information  sent from the lane center to each vehicle becomes $\Delta \theta_{i}$, $\Delta \theta_{i-1}$ ($2q$ floats). Note that, while $M^\mathrm{Lv}_{l,i}$ is available at the lane center and not in each vehicle, that matrix only has two nonzero entries equal to plus or minus one and remains constant throughout the iterates, such that it can cheaply be sent to each vehicle at the beginning of each NLP solution. The same applies to $M^\mathrm{vC}_{i}$, which is originally in the central node.

\subsubsection*{The ``Dual" Approach}
Alternatively, one can introduce as optimization variables one copy of $\theta_{i-1}, \theta_{i}$ for each vehicle as $\theta_{i,i-1}$ and $\theta_{i,i}$. It is then also required to introduce the additional coupling constraint 
$\theta_{i,i}-\theta_{i+1,i}=0$.
The optimization problem then reads as
	\begin{align*}
	\underset{y}{\min} & \quad \sum_{i=1}^{N}J_{i}(w_{i}) &&& 
	\\
	\mathrm{s.t.} 
	& \quad  g(w_i,T_i) = 0, &i &\in \mathcal{I}^\mathrm{v}, &&|\, \lambda_i, 
	\\
	& \quad h_{i}^\mathrm{PR}(w_i,\theta_{i,i-1},\theta_{i,i}) \leq 0, & i &\in \mathcal{I}^\mathrm{v}, &&|\, \mu_i^\mathrm{PR},
	\\
		& \quad \theta_{i,i} - \theta_{i+1,i} = 0, &
		 i & \in \mathcal{I}^\mathrm{v}_l, \, l\in\mathcal{I}^\mathrm{L} ,  &&|\, \lambda_{i}^\theta,
		\\
	& \quad h^\mathrm{C}(T) \leq 0, &&&&|\, \mu^\mathrm{C},  
	\end{align*}
where $\mu^\mathrm{PR}_{i,k}=\left (\mu^\mathrm{P}_{i,k},\mu^\mathrm{Rf}_{i,k},\mu^\mathrm{Rr}_{i,k}\right )$ and
\begin{align*}
h_{i}^\mathrm{PR} (w_i,\theta_{i,i-1},\theta_{i,i}) := 
\begin{bmatrix}
c_i(x_{i,k},u_{i,k}) \\ 
p_{i,k} + \delta_i/2 - \rho_{i,k}(\theta_{i,i}) \\
\rho_{i-1,k}(\theta_{i,i-1}) + \delta_{i-1}/2 - p_{i,k}  \\
\end{bmatrix}.
\end{align*}

In this approach, $(\theta_{i,i-1}, \theta_{i,i}, \mu^\mathrm{Rf}_{i}, \mu^\mathrm{Rr}_{i}, s^\mathrm{Rf}_{i}, s^\mathrm{Rr}_{i})$ are included in the vehicle variables $z^\mathrm{v}_i$. In the lane centers, we have $z^\mathrm{L}_{l}=(\lambda_i^\theta, i\in\mathcal{I}_l^\mathrm{v})$
and $r^\mathrm{L}_{l}=(\theta_{i,i} - \theta_{i+1,i}, \, i  \in \mathcal{I}^\mathrm{v}_l, \, l\in\mathcal{I}^\mathrm{L})$.
In the Dual approach, all primal variables are at the vehicle level, and block-wise Hessian regularization needs to be done at the vehicle level only (c.f. the discussion in Section~\ref{sec:pdip_nonconvexity}).

The information sent from vehicle $i$  on Line~\ref{alg:ap_distributionLevel2:laneComp2} of Algorithm~\ref{alg:ap_distributionLevel2} is then
	\begin{align*}
	&\frac{\partial z^\mathrm{v}_i}{\partial \theta_{i,\cdot}}^\top \left (M^\mathrm{v}_i\right )^{-1}\frac{\partial z^\mathrm{v}_i}{\partial \theta_{i,\cdot}},&& \text{size } q\times q,\\
	&\frac{\partial z^\mathrm{v}_i}{\partial \theta_{i,\cdot}}^\top \left (M^\mathrm{v}_i\right )^{-1}r^\mathrm{v}_i,&& \text{size } q,\\
	&\frac{\partial z^\mathrm{v}_i}{\partial \theta_{i,\cdot}}^\top \left ( M^\mathrm{v}_i\right )^{-1}\frac{\partial z^\mathrm{v}_i}{\partial T_{i}},&& \text{size } q\times n_{T_i},\\
	&\theta_{i,\cdot},&& \text{size } q,
	\end{align*}
i.e., the same data-amount as the Primal approach, since the information needs to be sent for both $\theta_{i,i-1}, \, \theta_{i,i}$. Note that in this case the same considerations made for $M^\mathrm{Lv}_{l,i}$ and $M^\mathrm{vC}_{i}$ in the Primal approach apply to $\frac{\partial z^\mathrm{v}_i}{\partial \theta_{i,\cdot}}$ and $\frac{\partial z^\mathrm{v}_i}{\partial T_{i}}$. 
Moreover, we have
\begin{align*}
\hspace{-0.3em}\nabla_{w_{i}} \mathcal{L} &= \nabla_{w_{i}} J_{i} + \nabla_{w_{i}}g_{i}\lambda_{i} + \nabla_{w_{i}}h^\mathrm{PR}_{i}\mu^\mathrm{PR}_{i} \\
&\hspace{8em}+ \nabla_{\theta_{i,i}} w_i \lambda^\theta_i  -\nabla_{\theta_{i,i-1}} w_i \lambda^\theta_{i-1},  \\
\hspace{-0.3em}\nabla_{T_{i}} \mathcal{L} &=  \nabla_{T_{i}}g_{i}\lambda_{i} + \nabla_{T_i}h^\mathrm{C}\mu^\mathrm{C}.
\end{align*}
Consequently, the lane-center-to-vehicle communication on Line~\ref{alg:ap_distributionLevel2:laneComp2} of Algorithm~\ref{alg:ap_distributionLevel2} consists of  $\Delta \lambda_{i-1}$, $\Delta \lambda_{i}$, i.e., $2q$ floats.

Since both approaches have the same communication footprint, they are equally valid alternatives.

\subsection{Example}
By selecting $\theta_{i}$ such that $q$ is small, significant reductions in the amount of data communicated are achieved at the cost of some sub-optimality. 
\iffigs
\begin{figure}[t]
	\centering
%	\psfrag{y}[Bc][Bc][0.9][0]{$p$ m}
%	\psfrag{x}[Bc][Bc][0.9][0]{$t$ s}
%	\psfragfig[width=0.8\linewidth]{results/recaApproximationIllustration.eps}	
	\includegraphics[width=0.8\linewidth]{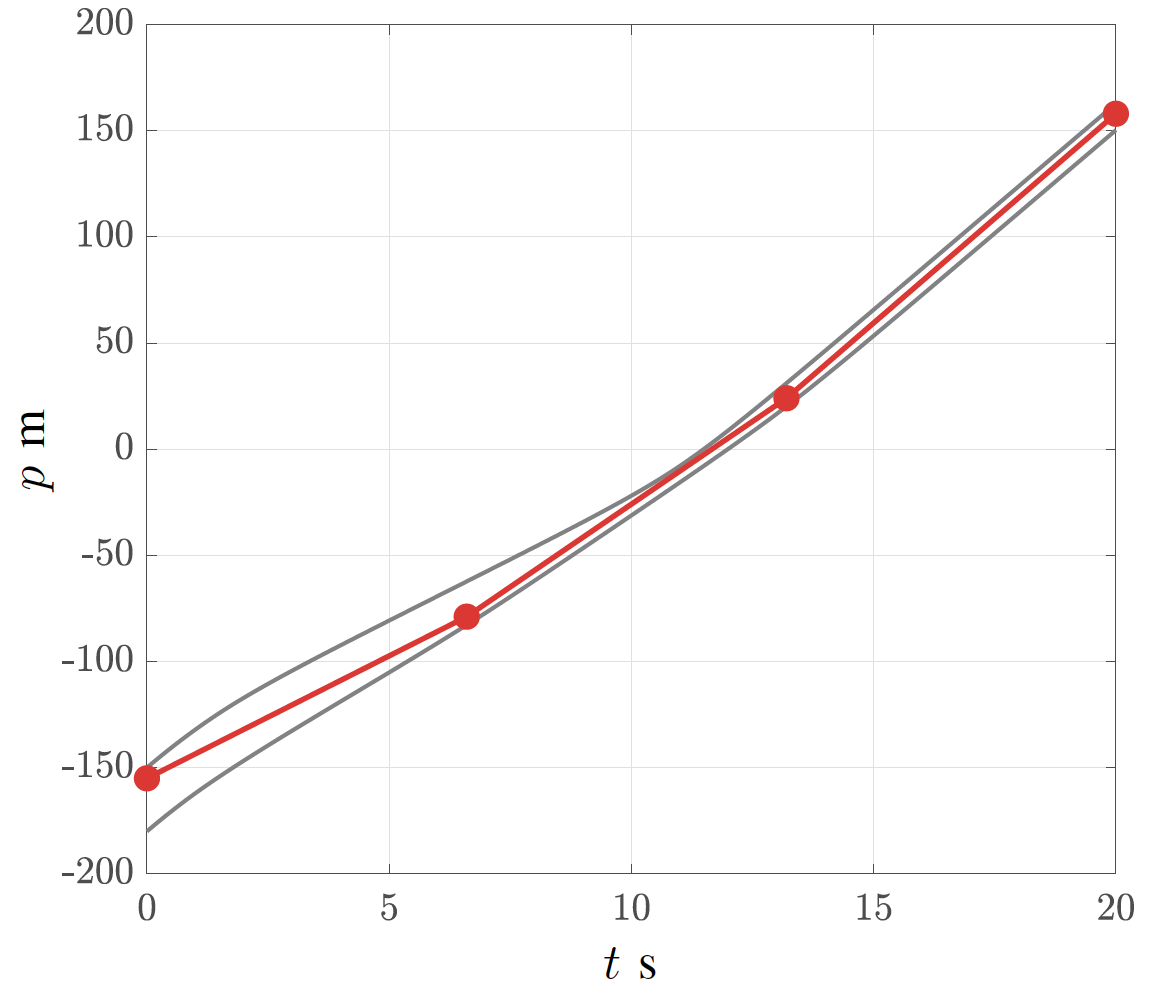}
	\caption{Illustration of a piecewise linear RECA parameterization: vehicle trajectories in gray, $\rho_{i,k}(\theta_{i})$ in red, $\theta_{i}$ as red dots.}\label{fig:pdip_recaApproximationIllustration}
\end{figure}
\fi
To evaluate the trade-off between reduction of communication and cost increase, we consider the case shown in \fig{\ref{fig:pdip_recaApproximationIllustration}}, where  $\rho_{i,k}(\theta_{i})$ is the piecewise linear function
\begin{align*}
&\rho_{i,k}(\theta_{i}) =\\ 
&\hspace{8pt}\left\lbrace \hspace{-4pt}
\renewcommand{\arraystretch}{2}
\begin{array}{ll}
\theta_{i}^{(1)} \hspace{-2pt}+\hspace{-1pt} \frac{\theta_{i}^{(2)}-\theta_{i}^{(1)}}{\rule{0pt}{0.75em}\lfloor K/3\rfloor}k,	& k \in [0, \lfloor K/3\rfloor],\\
\theta_{i}^{(2)} \hspace{-2pt}+\hspace{-1pt} \frac{\theta_{i}^{(3)}-\theta_{i}^{(2)}}{\rule{0pt}{0.75em}\lfloor K/3\rfloor}(k\hspace{-1pt}-\hspace{-1pt}\lfloor K/3\rfloor),	& k \in [\lfloor K/3\rfloor\hspace{-1pt}+\hspace{-1pt}1, 2\lfloor K/3\rfloor],\\
\theta_{i}^{(3)} \hspace{-2pt}+\hspace{-1pt} \frac{\theta_{i}^{(4)}-\theta_{i}^{(1)}}{\rule{0pt}{0.75em}\lceil K/3\rceil}(k\hspace{-1pt}-\hspace{-1pt}2\lfloor K/3\rfloor),	& k \in [2\lfloor K/3\rfloor\hspace{-1pt}+\hspace{-1pt}1, K\hspace{-1pt}+\hspace{-1pt}1], 
\end{array}
\right.
\end{align*}
where the superscript on $\theta_{i}$ indicates the vector element and $q=4$. When $n_{T_{i}}=4$, no more than $60$ floats are sent from a vehicle to the lane center on Line~\ref{alg:ap_distributionLevel2:laneComp2} of Algorithm~\ref{alg:ap_distributionLevel2}, which, according to~\eqref{eq:comm_time}, will take at least $0.7 \ \mathrm{ms}$, i.e., a reduction of almost $99 \%$ in communication time.
\iffigs
\begin{figure}[t]
	\centering
%	\psfrag{x}[Bc][Bc][0.8][0]{Sub-optimality \%}
%	\psfrag{y}[Bc][Bc][0.8][0]{Relative Frequency}
%	\psfragfig[width=1\linewidth]{results/recaApproximationSubOptimality.eps}
	\includegraphics[width=1\linewidth]{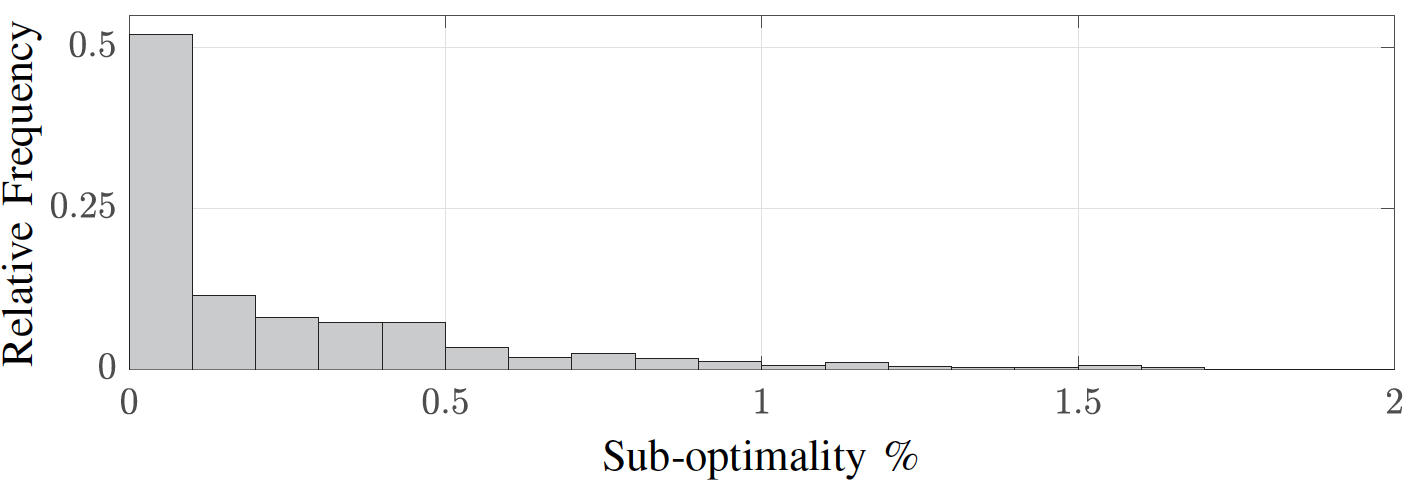}
	\caption{Distribution of the suboptimality resulting from the use of approximate RECA constraints with a piecewise linear $\rho_{i,k}(\theta_{i})$}\label{fig:pdip_approximateRECArandomEval}
\end{figure}
\fi

To assess the sub-optimality induced, we evaluated 500  scenarios with 4 vehicles per lane (16 in total), using the models and objective functions of Section~\ref{sec:ap_example}. 
The vehicles were initialized at randomly drawn distances in the interval $[50, 150] \ \mathrm{m}$ from the intersection, and the crossing order was computed with the heuristic of \cite{Hult2018b_kappa}.
As  \fig{\ref{fig:pdip_approximateRECArandomEval}} demonstrates, the sub-optimality induced by the parameterized constraints is below $0.1 \%$ in more than $50\%$ of the cases.
The small impact is illustrated in \fig{\ref{fig:pdip_recaApproximationExample}}, which shows the difference in the optimal velocity profiles for the scenario corresponding  to the median sub-optimality. 
\iffigs
\begin{figure}[t]
	\centering
%	\psfrag{y}[Bc][Bc][0.8][0]{Velocity Difference km/h}
%	\psfrag{x}[Bc][Bc][0.8][0]{$t$ s}
%	\psfragfig[width=1\linewidth]{results/recaApproximationExample.eps}
	\includegraphics[width=1\linewidth]{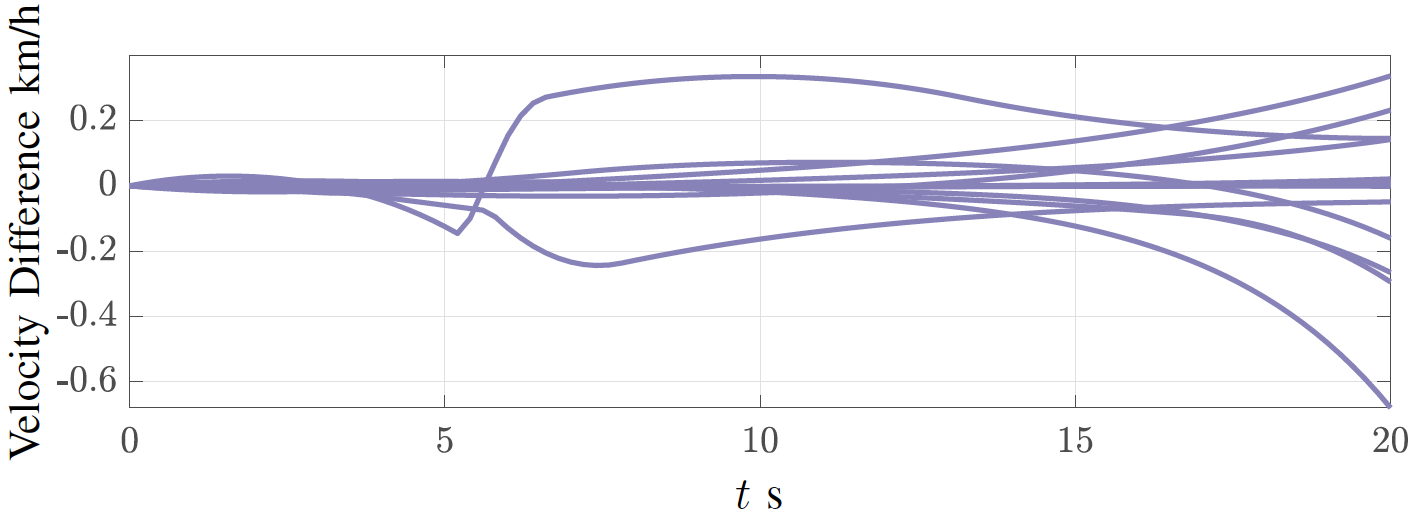}
	\caption{Difference between the optimal velocity profiles and those obtained using the parameterized RECA constraints \eqref{eq:pdip_parametrizedRECA} for a 16 vehicle scenario.}\label{fig:pdip_recaApproximationExample}
\end{figure}
\fi
Interestingly, the difference between the optimal control commands at $k=0$ in the two solutions is smaller than $0.013 \%$ of the input range for all vehicles.
This is below the quantization error of many actuators, such that the difference might not be noticed in practice.
Finally,  more ``flexible'' parameterizations of $\rho_{i}$ could be used to reduce sub-optimality, e.g., by including additional linear segments or by using higher-order polynomials.

\subsection{Reduction of the number of communication rounds}\label{sec:pdip_communicationRoundReduction}
We ought to stress that we intentionally selected a simple implementation of a primal-dual interior-point algorithm when deriving Algorithm~\ref{alg:ap_distributedIP}. This choice has been done both for simplicity of exposition and in order to focus on the main contributions of this paper. Note, however, that more refined update rules for the barrier parameter $\tau$ and Predictor-Corrector strategies~\cite{Nocedal2006_kappa} could be employed to improve convergence.

A simple way to reduce the number of iterations consists in solving the problem to a rough accuracy. 
As remarked in the discussion on \fig{\ref{fig:pdip_exampleVelocity}}, practically acceptable solutions can be obtained for  $\tau$ larger than relevant tolerances on $\|r(z)\|$, since this entails accurately satisfying the constraints while accepting some degree of suboptimality. 
\iffigs
\begin{figure}[t]
	\centering
%	\psfrag{x}[Bc][Bc][0.8][0]{ $\tau^\mathrm{min}$}
%	\psfrag{y}[Bc][Bc][0.8][0]{Sub-optimality \%}
%	\psfrag{y2}[Bc][Bc][0.8][0]{Iterations}
%	\psfragfig[width=1\linewidth]{results/pdip_12tauSubopt.eps}
	\includegraphics[width=1\linewidth]{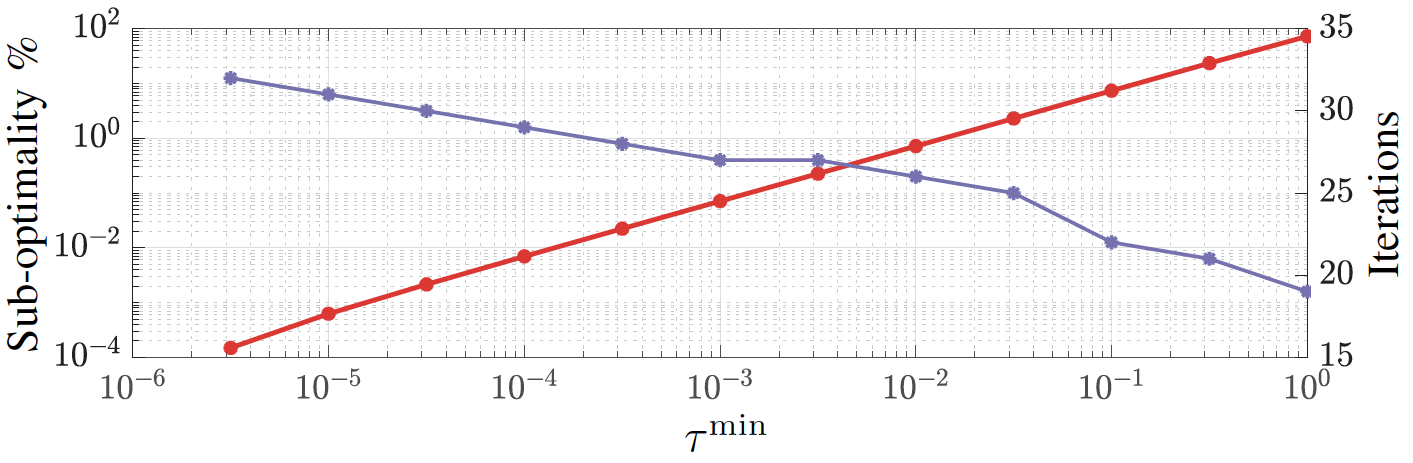}
	\caption{Sub-optimality (red)  and number of iterations (blue) required to reach $\|r(z)\|_\infty\leq 10^{-6} $ for different value of $\tau^\mathrm{min}$.}\label{fig:pdip_tauSubopt}
\end{figure}
\begin{figure}[t]
	\centering
%	\psfrag{x}[Bc][Bc][0.8][0]{$t$ s}
%	\psfrag{y}[Bc][Bc][0.8][0]{$v_{i,k}$ km/h}
%	\psfragfig[width=1\linewidth]{results/pdip_12velocity.eps}
	\includegraphics[width=1\linewidth]{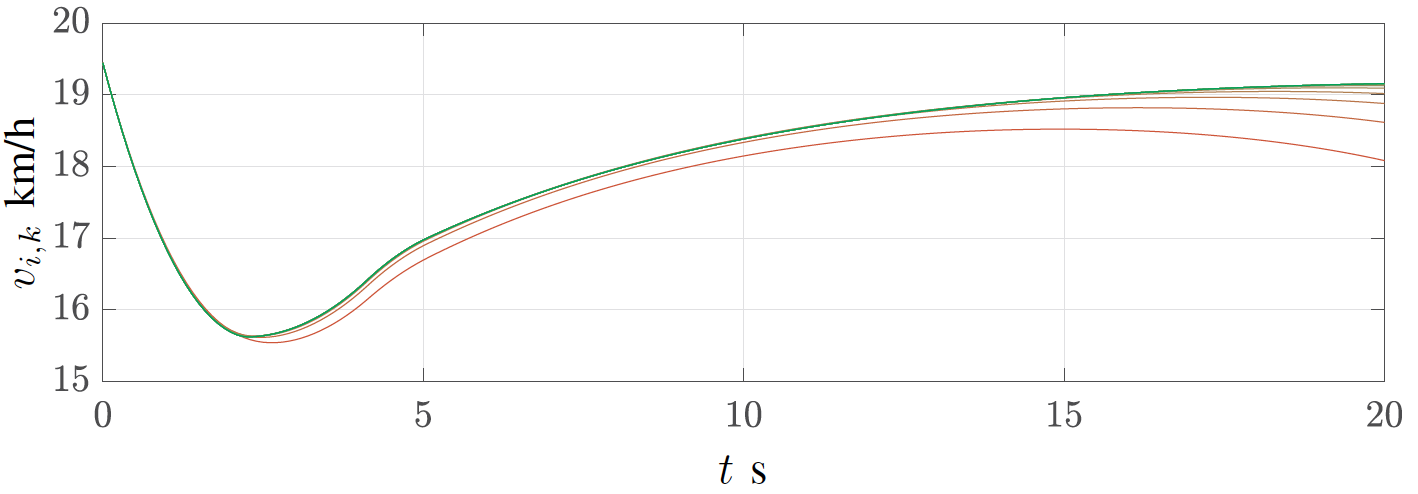}
	\caption{Velocity profiles for $\|r(z)\|_\infty\leq 10^{-6}$: hues between red ($\tau^\mathrm{min}=1$) and green ($\tau^\mathrm{min}=10^{-6}$). }\label{fig:pdip_tauVelocity}
\end{figure}
\fi
In \fig{\ref{fig:pdip_tauSubopt}} we display the results from the scenario considered in Section~\ref{sec:ap_example}, where  $\tau$ is prevented from being smaller than $\tau^\mathrm{min}$, for $\tau^\mathrm{min}$ between $1$ and $10^{-6}$, with $\varepsilon=10^{-6}$ for all cases.
The sub-optimality induced and the number iterations required to reach $\|r(z)\|\leq \varepsilon$ is shown in \fig{\ref{fig:pdip_tauSubopt}}.
Note for instance that $23$ iterations are required for $\tau^\mathrm{min}=10^{-2}$, compared to $33$ in case of $\tau^\mathrm{min}=10^{-6}$, at the expense of less than $1 \%$ sub-optimality. 
The optimal velocity profiles  for one vehicle  for the different values of $\tau^\mathrm{min}$ is  shown in \fig{\ref{fig:pdip_tauVelocity}} (c.f. \fig{\ref{fig:pdip_exampleVelocity}}).
The difference with respect to  the optimal solution is small enough to be practically irrelevant for all but the highest value of  $\tau^\mathrm{min}$. 

\section{Conclusions}\label{sec:ap_discussion}
In this paper we presented tailored linear algebra for Primal-Dual Interior-Point algorithms to be deployed for the optimal coordination of automated vehicles at intersections under a fixed crossing order.
The algorithm inherits the strong convergence guarantees of centralized PDIP solvers and makes it possible to efficiently include rear-end collision avoidance constraints.
We showed that the problem is structured so that the KKT-system can be solved in a hierarchical way, where most operations are parallelized and solved separately for all vehicles and for all lanes.
Additionally, the step size selection through backtracking on a merit function can be distributed under the same pattern.
To reduce the data exchange, we proposed a parameterized and slightly conservative reformulation of the rear-end collision avoidance constraints, and demonstrated its effectiveness in simulations.

We are currently investigating formulations of the coordination problem that allows removal of the restrictive assumption of full CAV penetration. 
We also aim at extending our approach  to scenarios with several connected intersections. Finally, future work will consider testing the algorithm in challenging experimental scenarios.

\ifCLASSOPTIONcaptionsoff
  \newpage
\fi

\bibliographystyle{IEEEtran}
  \bibliography{bibliography}
  
  \begin{IEEEbiography}[{\includegraphics[width=1in,height=1.25in,clip,keepaspectratio]{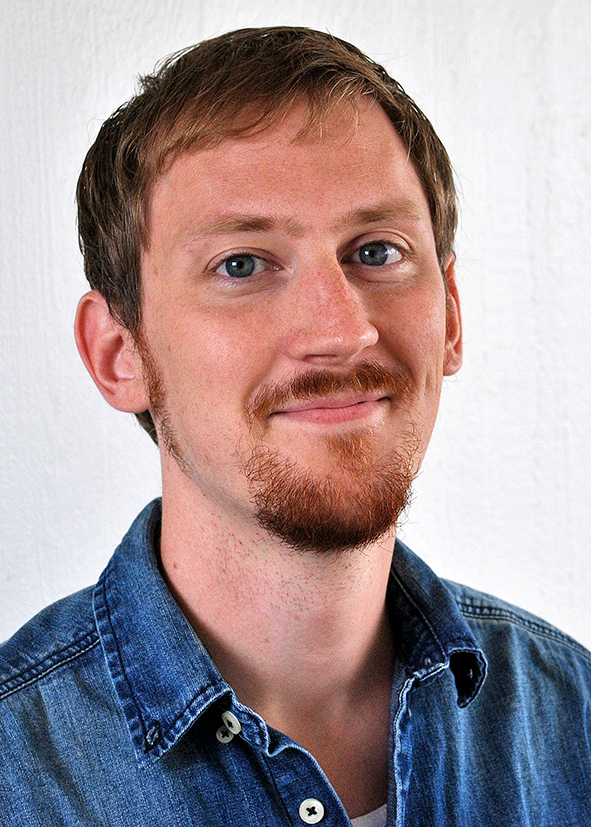}}]{Robert Hult} received the Masters degree in Systems, Control and Mechatronics in 2013, and the Ph.D. degree in 2019, both from Chalmers University of Technology, Sweden. He is currently a research engineer at Volvo Autonomous Solutions in Gothenburg Sweden, working with automated trucks and construction machines. His research interests include distributed and cooperative predictive control, in particular with applications to cooperative vehicles and intelligent transportation systems and site planning for automated vehicles in confined~areas.
  \end{IEEEbiography}
  \vspace{-1em}
  \begin{IEEEbiography}[{\includegraphics[width=1in,height=1.25in,clip,keepaspectratio]{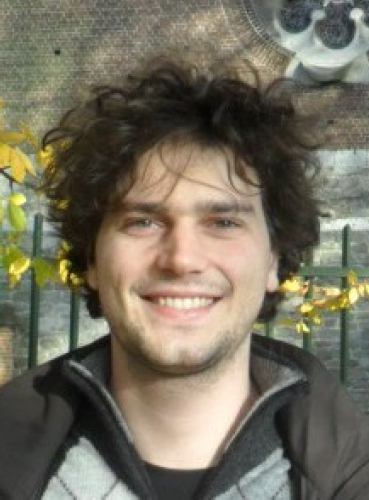}}]{Mario Zanon} received the Master's degree in Mechatronics from the University of Trento, and the Dipl\^{o}me d'Ing\'{e}nieur from the Ecole Centrale Paris, in 2010. After research stays at the KU Leuven, University of Bayreuth, Chalmers University, and the University of Freiburg he received the Ph.D. degree in Electrical Engineering from the KU Leuven in November 2015. He held a Post-Doc researcher position at Chalmers University until the end of 2017. From 2018 until 2021 he has been Assistant Professor and from 2021 he is Associate Professor at the IMT School for Advanced Studies Lucca. His research interests include numerical methods for optimization, economic MPC, reinforcement learning, optimal control and estimation of nonlinear dynamic systems, in particular for aerospace and automotive applications.
  \end{IEEEbiography}
  \vspace{-1em}
  \begin{IEEEbiography}[{\includegraphics[width=1in,height=1.25in,keepaspectratio]{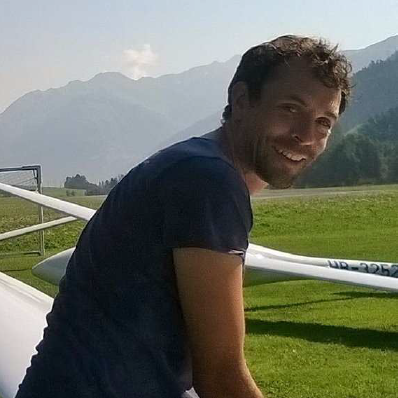}}]{S\'ebastien Gros}
  	received his Ph.D degree from EPFL, Switzerland, in 2007. After a journey by bicycle from Switzerland to the Everest base camp in full autonomy, he joined a R\&D group hosted at Strathclyde University focusing on wind turbine control. In 2011, he joined the university of KU Leuven, where his main research focus was on optimal control and fast MPC for complex mechanical systems. He joined the Department of Signals and Systems at Chalmers University of Technology, G\"{o}teborg in 2013, where he became associate Prof. in 2017. He is now full Prof. at NTNU, Norway and guest Prof. at Chalmers. His main research interests include numerical methods, real-time optimal control, reinforcement learning, and the optimal control of energy-related applications.
  \end{IEEEbiography}
  \vspace{-1em}
  \begin{IEEEbiography}[{\includegraphics[width=1in,height=1.25in,clip,keepaspectratio]{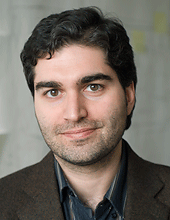}}]{Paolo Falcone}
  	received his Ph.D. degree in Information Technology in 2007 from the University of Sannio, in Benevento, Italy. He is Associate Professor at the Department of Engineering of the University of Modena and Reggio Emilia, Modena, Italy and Professor at the Department of Electrical Engineering at Chalmers University of Technology, Gothenburg, Sweden. His research focuses on constrained optimal control applied to autonomous and semi-autonomous mobile systems, cooperative driving and intelligent vehicles. He is involved in several projects, in cooperation with industry, focusing on autonomous driving, cooperative driving and vehicle dynamics control.
  \end{IEEEbiography}

\end{document}